\newcommand{\cal}{\mathcal}
\newcommand{\State}[1]{#1\;}
\newcommand{\StateCmt}[2]{
  #1; \tcc*[f]{#2}\;
}
\renewcommand{\Return}[1]{\State{\textbf{return} #1}}
\let\oldnl\nl
\newcommand{\nonl}{\renewcommand{\nl}{\let\nl\oldnl}}
\newcommand{\stitle}[1]{\vspace{1ex} \noindent{\bf #1}}
\newcommand{\kw}[1]{{\ensuremath {\mathsf{#1}}}\xspace}
\newcommand{\la}{\leftarrow}
\newcommand{\cfig}{Figure~}
\newcommand{\ctab}{Table~}
\newcommand{\csec}{Section~}
\newcommand{\calg}{Algorithm~}
\newcommand{\cthm}{Theorem~}
\newcommand{\clem}{Lemma~}
\newcommand{\cequ}[1]{Equation~(#1)}
\newcommand{\ie}{{i.e.}}
\newcommand{\eg}{{e.g.}}
\newcommand{\bigo}{ {\cal O}}
\renewcommand{\widebar}[1]{\mkern 1.5mu\overline{\mkern-1.5mu#1\mkern-1.5mu}\mkern 1.5mu}
\newcommand{\kdbb}{\kw{KDBB}}
\newcommand{\madec}{\kw{MADEC^+}}
\newcommand{\madecp}{\kw{MADEC_p^+}}
\newcommand{\kdcwob}{\kw{kDC/UB1}}
\newcommand{\kdcwor}{\kw{kDC/RR3\&4}}
\newcommand{\kdcwoi}{\kw{kDC\text{-}Degen}}
\newcommand{\kdcwobr}{\kw{kDC/UB1\&RR3\&4}}
\newcommand{\mcbrb}{\kw{MC\text{-}BRB}}
\newcommand{\degenopt}{\kw{Degen\text{-}opt}}
\newcommand{\bbsearch}{\kw{Branch\&Bound}}
\newcommand{\bbsearcht}{\kw{Branch\&Bound\text{-}t}}
\newcommand{\lb}{lb}
\newcommand{\notS}{\widebar{S'}}
\newcommand{\gT}{\cal T}
\newcommand{\w}{\kw{w}}
\newcommand{\kdc}{\kw{kDC}}
\newcommand{\kdct}{\kw{kDC\text{-}t}}
\newcommand{\highlight}[1]{{#1}}
\begin{document}

\title{Efficient Maximum $k$-Defective Clique Computation with Improved
Time Complexity}

\author{Lijun Chang}
\email{Lijun.Chang@sydney.edu.au}
\orcid{0000-0002-6830-3900}
\affiliation{
    \institution{The University of Sydney}
	\city{Sydney}
    \country{Australia}
}

\setcopyright{acmlicensed}
\acmJournal{PACMMOD}
\acmYear{2023} \acmVolume{1} \acmNumber{3 (SIGMOD)} \acmArticle{209} \acmMonth{9} \acmPrice{15.00}\acmDOI{10.1145/3617313}

\begin{abstract}
$k$-defective cliques relax cliques by allowing up-to $k$ missing edges
from being a complete graph. This relaxation enables us to find larger
near-cliques and has applications in link prediction, cluster
detection, social network analysis and transportation science. The
problem of finding the largest $k$-defective clique has been recently
studied with several algorithms being proposed in the literature. 
However, the currently fastest algorithm \kdbb does not improve
its time complexity from being the trivial $\bigo(2^n)$, and also,
\kdbb's practical performance is still not satisfactory.
In this paper, we advance the state of the art for exact maximum
$k$-defective clique computation, in terms of both time
complexity and practical performance.
Moreover, we separate the techniques required for achieving the time
complexity from others purely used for practical performance
consideration; this design choice may facilitate the research
community to further improve the practical efficiency while not
sacrificing the worst case time complexity.
In specific, we first develop a general framework \kdc that beats the
trivial time complexity of $\bigo(2^n)$ and achieves a better time
complexity than all existing algorithms. The time complexity of \kdc is
solely achieved by our newly designed non-fully-adjacent-first branching rule, excess-removal
reduction rule and high-degree reduction rule.
Then, to make \kdc practically efficient, we further propose a new upper
bound, two new reduction rules, and an algorithm for efficiently
computing a large initial solution.
Extensive empirical studies on three benchmark graph
collections with $290$ graphs in total demonstrate that \kdc
outperforms the currently
fastest algorithm \kdbb by several orders of magnitude.
\end{abstract}

\begin{CCSXML}
<ccs2012>
   <concept>
       <concept_id>10002950.10003624.10003633.10010917</concept_id>
       <concept_desc>Mathematics of computing~Graph algorithms</concept_desc>
       <concept_significance>500</concept_significance>
       </concept>
   <concept>
       <concept_id>10002951.10003260.10003282.10003292</concept_id>
       <concept_desc>Information systems~Social networks</concept_desc>
       <concept_significance>500</concept_significance>
       </concept>
 </ccs2012>
\end{CCSXML}

\ccsdesc[500]{Mathematics of computing~Graph algorithms}
\ccsdesc[500]{Information systems~Social networks}

\maketitle  

\section{Introduction}
\label{sec:introduction}

The relationship among entities in many applications, such as social
media, communication networks, collaboration networks, web graphs, and
the Internet, can be naturally captured by the graph model. As a result,
real-world graph data is abundant, and graph-based data analysis has
been widely used to extract insights for guiding the decision-making
process.
In particular, the problem of identifying dense (\ie, cohesive)
subgraphs has been extensively studied~\cite{Book18:Chang,MMGD10:Lee},
since it serves many applications.
For example, identifying large dense subgraphs has been used for
detecting anomalies in financial networks~\cite{ahmed2016survey}, 
identifying real-time stories in social media~\cite{VLDBJ14:Angel},
detecting communities in social networks~\cite{bedi2016community},
and finding protein complexes in biological
networks~\cite{suratanee2014characterizing}.

Clique (\ie, complete subgraph) is a classic notion for defining dense
subgraphs, which requires every pair of distinct vertices in the
subgraph to be directly connected by an edge.
It is easy to see that a clique is the densest structure that a subgraph
can be. As a result, clique related problems have been extensively
explored in the literature, and many advancements have been made regarding
clique computation.
For example, it has been shown that the maximum clique is not only
NP-hard to compute exactly~\cite{CCC72:Karp}, but also NP-hard to
approximate within a factor of $n^{1-\epsilon}$ for any constant $0 <
\epsilon < 1$~\cite{FOCS96:Hastad}; here $n$ denotes the number of
vertices in the input graph $G$.
Nevertheless, exact algorithms have been studied both theoretically and
practically in the literature.
The state-of-the-art time complexity for maximum clique
computation is $\bigo^*(1.1888^n)$~\cite{clique}, and one of the
practically efficient algorithms is \mcbrb~\cite{KDD19:Chang}; here the
$\bigo^*$ notation hides polynomial factors.
In addition, the problems of enumerating all maximal cliques,
\highlight{enumerating all cliques of the maximum size}, and enumerating
and counting all cliques with $k$ vertices for a small $k$ have also been extensively
studied~\cite{VLDBJ20:Chang,JEA13:Eppstein,WALCOM17:Tomita,PVLDB20:Li,WSDM20:Jain}.

Requiring a large subgraph to be fully connected however is often too
restrictive for many applications, such as complex network
analysis~\cite{EOR13:Pattillo}, considering that data is often noisy or
incomplete.
Hence, various clique relaxations have been formulated in the
literature, such as quasi-clique~\cite{LATIN02:Abello},
$k$-plex~\cite{OR11:Balasundaram}, $k$-club~\cite{EJOR02:Bourjolly}, and
$k$-defective clique~\cite{Bio06:Yu}.
In this paper, we focus on the $k$-defective clique, which
allows a subgraph to miss up-to $k$ edges to be a complete subgraph;
note that a $k$-defective clique for $k=0$ is a clique. 
The concept of $k$-defective clique was formulated in~\cite{Bio06:Yu}
for predicting missing
interactions between proteins in biological networks. Besides, it also
finds applications in cluster detection~\cite{MP22:Stozhkov}, transportation
science~\cite{TS02:Sherali}, and social network
analysis~\cite{WWW20:Jain,IJC21:Gschwind}.
Since a clique is also a $k$-defective clique for any $k \geq 0$, the
maximum $k$-defective clique is no less than and usually can be much
larger than the maximum clique.
Consider the graph in \cfig\ref{fig:introduction}, it is easy to see
that the maximum clique size is $4$, while the maximum
$k$-defective clique size for any $k \leq 4$ is $4+k$; specifically, the
entire graph is a $4$-defective clique, and the remaining graph after
removing any vertex is a $3$-defective clique.

\begin{figure}[ht]
\centering
\includegraphics[scale=1.2]{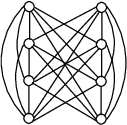}
\caption{Clique vs. $k$-Defective Clique}
\label{fig:introduction}
\end{figure}

%

The problem of maximum $k$-defective clique computation is
also NP-hard~\cite{STOC78:Yannakakis}.
The state-of-the-art time complexity for maximum $k$-defective clique
computation that beats the trivial $\bigo^*(2^n)$ time complexity is
achieved by the \madec algorithm proposed
in~\cite{COR21:Chen}, which runs in $\bigo^*(\sigma_k^n)$ time where
$\sigma_k < 2$ is the largest real root of the equation $x^{2k+3} -
2x^{2k+2}+1=0$. Although a graph coloring-based upper bound as well as
other pruning techniques are proposed in~\cite{COR21:Chen} aiming to improve
the practical performance of \madec, it is shown in~\cite{AAAI22:Gao}
that the graph coloring-based upper bound proposed in~\cite{AAAI22:Gao}
is ineffective and \madec is inefficient in practice especially when $k \geq
10$. For example, for $k \geq 15$ on the Facebook graphs collection
(please refer to \csec\ref{sec:experiment} for the description of the
dataset), even the version of \madec that is further optimized by the
authors of~\cite{AAAI22:Gao} was still not able to find the maximum
$k$-defective clique for any graph instance with a time limit of $3$ hours.
With the goal of enhancing the practical performance, the \kdbb algorithm is
designed in~\cite{AAAI22:Gao} which proposes and incorporates
preprocessing as well as multiple pruning techniques. Nevertheless, 
\kdbb is still inefficient, and moreover, no
time complexity better than $\bigo^*(2^n)$ has been proved for \kdbb.

In this paper, we aim to advance the state of the art for maximum
$k$-defective clique computation, both theoretically and practically.
We first develop a general backtracking framework \kdc based on our newly
designed \highlight{non-fully-adjacent-first} branching rule ({\bf BR}),
\highlight{excess-removal reduction rule ({\bf RR1}) and high-degree
reduction rule ({\bf RR2})}. We prove that our framework runs in $\bigo^*(\gamma_k^n)$ time
where $\gamma_k < 2$ is the largest real root of the equation
$x^{k+3}-2x^{k+2}+1=0$. 
In comparison, the time complexity of \madec is $\bigo^*(\gamma_{2k}^n)$
by observing that $\sigma_k = \gamma_{2k}$.
Note that $\gamma_k < \gamma_{2k}$.
Thus, we advance the state of the art regarding the theoretical time
complexity.
We remark that the time complexity of \kdc is solely achieved by our
branching rule {\bf BR} and reduction rules {\bf RR1} and {\bf RR2}, and
these are the minimal requirements for achieving the time
complexity of $\bigo^*(\gamma_k^n)$.
We deliberately separate the techniques required for
achieving the time
complexity from the ones used purely for improving the practical
performance, such that others may further improve the efficiency while
retaining the time complexity of $\bigo^*(\gamma_k^n)$.

To make \kdc practically efficient, we further propose techniques from
three aspects: \highlight{an improved graph coloring-based upper bound
({\bf UB1}), a degree-sequence-based reduction rule ({\bf RR3}), a
second-order reduction rule ({\bf RR4})},
and a new algorithm \degenopt for efficiently
computing a large initial solution.
Specifically, given a graph $g$ and a $k$-defective clique represented
by its set of vertices $S$ such that $S \subseteq V(g)$, \highlight{our
improved coloring-based upper bound} {\bf UB1}
computes an upper bound of the largest $k$-defective clique that is in
$g$ and contains $S$, and prunes the backtracking instance $(g,S)$ if the
computed upper bound is no larger than the currently found largest
solution.
Same as the graph coloring-based upper bound proposed
in~\cite{COR21:Chen}, {\bf UB1} also utilizes graph coloring; but {\bf
UB1} computes a much tighter (\ie, smaller) upper bound
than~\cite{COR21:Chen}.
In essence, a (greedy) graph coloring is used to partition the vertices
into independent sets, by observing that all vertices with the same
color form an independent set.
Let $\pi_1,\ldots,\pi_c$ be a partitioning of $V(g)\setminus S$ into
independent sets. The upper bound computed in~\cite{COR21:Chen} is $|S|
+ \sum_{i=1}^c \min(\lfloor \frac{1+\sqrt{8k+1}}{2}\rfloor, |\pi_i|)$,
which is based on the observation that an independent set with more than
$\lfloor \frac{1+\sqrt{8k+1}}{2}\rfloor$ vertices will miss more than
$k$ edges and thus cannot be all contained in the same $k$-defective
clique. The upper bound of~\cite{COR21:Chen} has two deficiencies.
Firstly, it allows $c$ independent sets, each of size up-to $\lfloor
\frac{1+\sqrt{8k+1}}{2}\rfloor$, to be included into the solution for
computing the upper bound; this will actually introduce almost $c\times
k$ missing edges, much larger than the allowed $k$ missing edges. In
particular, if $|\pi_i| \geq \lfloor \frac{1+\sqrt{8k+1}}{2}\rfloor$ for
each $1 \leq i \leq c$, then the upper bound becomes $|S| + c\times
\lfloor \frac{1+\sqrt{8k+1}}{2}\rfloor$, while $|S|+c+k$ is a much
smaller upper bound.
Secondly, it ignores the missing edges within $S$ and the missing edges
between $S$ and $V(g)\setminus S$; for example, the upper bound remains
the same even if $S$ already has $k$ missing edges.
Our {\bf UB1} computes a tighter upper bound than
$|S|+c+k-|\overline{E}(S)|$ by resolving the above two issues; here
$\overline{E}(S)$ denotes the set of missing edges in $S$.

\stitle{Contributions.}
Our main contributions are as follows.
\begin{itemize}
	\item We develop a general framework \kdc for computing the maximum
		$k$-defective clique in $\bigo^*(\gamma_k^n)$ time, based on our
		newly designed branching rule {\bf BR} and reduction rules {\bf
		RR1} and {\bf RR2}; here $\gamma_k < 2$ is the largest real root
		of the equation $x^{k+3}-2x^{k+2}+1=0$. (\csec\ref{sec:framework})
	\item We propose a new upper bound {\bf UB1} based on graph
		coloring, which can be computed in linear time and is much
		tighter than the upper bounds proposed
		in~\cite{COR21:Chen,AAAI22:Gao}.
		(\csec\ref{sec:upper_bound})
	\item We propose two new reduction rules {\bf RR3} and {\bf RR4}
		that can be conducted in linear time.
		(\csec\ref{sec:reduction_rules})
	\item We propose an algorithm \degenopt for computing a large
		initial $k$-defective clique in $\bigo(\delta(G)\times m)$ time,
		where $m$ is the number of edges and $\delta(G) \leq \sqrt{m}$
		is the degeneracy of $G$. (\csec\ref{sec:initial})
\end{itemize}
We also conduct extensive empirical studies on \highlight{three}
benchmark graph collections with \highlight{$290$} graph instances in total to
evaluate our techniques (\csec\ref{sec:experiment}).
The results show that (1)~on the real-world graphs collection, \kdc with a
time limit of $3$ {\em seconds} solves even more graph instances than the
existing fastest algorithm \kdbb with a time limit of $3$ {\em hours}, and
(2)~on the $41$ Facebook graphs that have more than $15,000$ vertices,
\kdc is on average three orders of magnitude faster than \kdbb.
In addition, our ablation studies demonstrate that each of our
additional
techniques (\ie, upper bound {\bf UB1}, reduction rules {\bf RR3} and
{\bf RR4}, and initial solution computation) improves the
practical efficiency of \kdc.

\section{Preliminaries}
\label{sec:preliminaries}

In this paper, we focus on a large {\em unweighted} and {\em undirected} 
graph $G=(V, E)$, where $V$ is the set of vertices and $E$ is the set of
undirected edges; we consider only {\em simple} graphs, \ie, without
self-loops and parallel edges.
Let $n = |V|$ and $m=|E|$ denote the cardinalities of $V$ and $E$,
respectively.
%
%
We denote the undirected edge between $u$ and $v$ by both $(u,v)$ and
$(v,u)$; then, $u$ (resp. $v$) is said to be adjacent to and a neighbor
of $v$ (resp. $u$).
The set of neighbors of $u$ in $G$ is $N_G(u) = \{v \in V \mid (u,v) \in
E\}$, and the {\em degree} of $u$ in $G$ is $d_G(u) = |N_G(u)|$.
Given a vertex subset $S$ of $G$, we use $G[S]$ to denote the subgraph
of $G$ induced by $S$, \ie, $G[S] = (S, \{(u,v) \in E \mid u,v \in
S\})$.
For ease of presentation, we simply refer to an unweighted and undirected
graph as a graph, and omit the subscript $G$ from the notations when the
context is clear.
For an arbitrary given graph $g$, we denote its set of vertices and its
set of edges by $V(g)$ and $E(g)$, respectively. 

\begin{definition}[Clique]
	A graph $g$ is a clique (\ie, complete graph) if it has an edge
	between every pair of distinct vertices, \ie, $|E(g)| =
	\frac{|V(g)|(|V(g)|-1)}{2}$ or equivalently, 
	$d_g(u) = |V(g)|-1,\forall u \in V(g)$.
\end{definition}

\begin{definition}[$k$-Defective Clique]
	A graph $g$ is a $k$-defective clique if it misses at most $k$
	edges, \ie, $|E(g)| \geq \frac{|V(g)|(|V(g)|-1)}{2} - k$.
\end{definition}

The definition of $k$-defective clique relaxes the definition of clique
by allowing a few (\ie, $k$) missing edges, and $0$-defective cliques are
cliques.
Obviously, if a subgraph $g$ of $G$ is a $k$-defective clique, then the
subgraph of $G$ induced by vertices $V(g)$ is also a
$k$-defective clique. Thus, in this paper, \textbf{\em we simply refer
to a $k$-defective clique by its set of vertices}, and measure the size
of a $k$-defective clique $C \subseteq V$ by the number of vertices,
\ie, $|C|$.

\begin{figure}[htb]
\centering
\begin{tikzpicture}[scale=0.5]
\node[draw, circle, inner sep=1.5pt] (1) at (0,0) {$v_1$};
\node[draw, circle, inner sep=1.5pt] (2) at (0,-2) {$v_2$};
\node[draw, circle, inner sep=1.5pt] (3) at (2,-2) {$v_3$};
\node[draw, circle, inner sep=1.5pt] (4) at (4,-2) {$v_4$};
\node[draw, circle, inner sep=1.5pt] (5) at (4,0) {$v_5$};
\node[draw, circle, inner sep=1.5pt] (6) at (2,0) {$v_6$};
\node[draw, circle, inner sep=1.5pt] (7) at (2,2) {$v_7$};
\node[draw, circle, inner sep=1.5pt] (8) at (6,0) {$v_8$};
\node[draw, circle, inner sep=1.5pt] (9) at (6,-2) {$v_9$};
\node[draw, circle, inner sep=0pt] (10) at (9,-2) {$v_{10}$};
\node[draw, circle, inner sep=0pt] (11) at (9,0) {$v_{11}$};
\node[draw, circle, inner sep=0pt] (12) at (7.5,2) {$v_{12}$};
\path[draw,thick] (1) -- (2) -- (3) -- (4) -- (5) -- (6) -- (1) -- (7)
-- (5);
\path[draw,thick] (1) -- (3) -- (5) -- (2) -- (6) -- (4) -- (1);
\path[draw,thick] (3) -- (6) -- (7);
\path[draw,thick] (8) -- (9) -- (10) -- (11) -- (12) -- (8) -- (10) --
(12) -- (9) -- (11) -- (8);
\end{tikzpicture}
\caption{An example graph}
\label{fig:graph}
\end{figure}
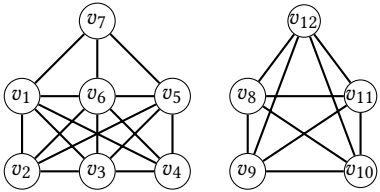

The property of $k$-defective clique is {\em hereditary}, \ie, any
subset of a $k$-defective clique is also a $k$-defective clique.
A $k$-defective clique $C$ of $G$ is a {\em maximal $k$-defective
clique} if every proper superset of $C$ in $G$ is not a $k$-defective
clique, and is a {\em maximum $k$-defective clique} if its size is the
largest among all $k$-defective cliques of $G$; note that the maximum
$k$-defective clique is not unique.
Consider the graph in \cfig\ref{fig:graph}, $\{v_8,v_9,\ldots,v_{12}\}$
is a maximum clique and is also a maximum $1$-defective clique. In
addition, both $\{v_1,v_2,v_3,v_4,v_6\}$ and $\{v_1,v_2,v_3,v_5,v_6\}$
are maximum $1$-defective cliques that miss the edge $(v_2,v_4)$ and the
edge $(v_1,v_5)$, respectively. $\{v_1,v_2,\ldots,v_6\}$ is a maximum
$2$-defective clique that misses edges $(v_2,v_4)$ and $(v_1,v_5)$.

To facilitate the presentation, we denote the set of edges that are
missing from a graph $g$ by $\overline{E}(g)$, \ie, $(u,v) \in
\overline{E}(g)$ if and only if $u \neq v$ and $(u,v) \notin E(g)$; we
call the edges of $\overline{E}(g)$ as \textbf{\em non-edges} of
$g$.
Thus, $g$ is a $k$-defective clique if and only if $|\overline{E}(g)|
\leq k$.
For two vertices $u$ and $v$ that are not adjacent (\ie, not connected
by an edge), we call $v$ (resp. $u$) a \textbf{\em non-neighbor}
of $u$ (resp. $v$); note that \textbf{\em a vertex is considered neither
a neighbor nor a non-neighbor of itself}.
We denote the set of all non-neighbors of $u$ in $G$ by
$\overline{N}_G(u) = V(G)\setminus (N_G(u)\cup u)$; note that, for
presentation simplicity, we denote the union of a set $S$ and a vertex
$u$ by $S\cup u$, and denote the subtraction of $u$ from $S$ by
$S\setminus u$.
For any set $S$ of vertices and a vertex $u$ (where $u$ could be either
in or not in $S$), we abbreviate $N_{G[S\cup u]}(u)$ as $N_S(u)$ and
abbreviate $\overline{N}_{G[S\cup u]}(u)$ as $\overline{N}_S(u)$.

\stitle{Problem Statement.} Given a graph $G = (V,E)$ and an integer $k
\geq 1$, we study the problem of maximum $k$-defective
clique computation, aiming to find the largest $k$-defective clique
in $G$.

Frequently used notations are summarized in \ctab\ref{table:notations}.

\begin{table}[t]
\small
\caption{Frequently used notations}
\label{table:notations}
\begin{tabular}{cp{.7\columnwidth}} \hline
	Notation & Meaning \\ \hline
	$G = (V,E)$ & an unweighted and undirected graph with vertex set $V$
	and edge set $E$ \\
	$g = (V(g), E(g))$ & a subgraph of $G$ \\
	$S, C \subseteq V$ & $k$-defective cliques \\
	$N_S(u)$ & the set of $u$'s {\em neighbors} that are in $S$ \\
	$\overline{N}_S(u)$ & the set of $u$'s {\em non-neighbors} that are in $S$ \\
	$d_S(u)$ & the {\em number} of $u$'s neighbors that are in $S$ \\
	$E(S)$ & the set of (undirected) {\em edges} in the subgraph of $g$ (or
	$G$) induced by $S$ \\
	$\overline{E}(S)$ & the set of (undirected) {\em non-edges} in the
	subgraph of $g$ (or $G$) induced by $S$ \\
	\hline
\end{tabular}
\end{table}

\subsection{Degeneracy Ordering, $k$-Core and $k$-Truss}

In this subsection, we review the concepts of degeneracy ordering,
\highlight{$k$-core and $k$-truss, which will be used in
\csec\ref{sec:time_complexity}}.

\begin{definition}[Degeneracy ordering]
\label{definition:degeneracy_ordering}
Given a graph $G$, an ordering $(v_1,v_2,\ldots,v_n)$ of its
vertices is a degeneracy ordering if for each $1 \leq i \leq n$, $v_i$
is the vertex with the smallest degree in the subgraph of $G$ induced by
vertices $\{v_i,v_{i+1},\ldots,v_n\}$.
\end{definition}

\begin{definition}[$k$-core~\cite{SN83:Seidman}]
\label{definition:k_core}
Given a graph $G$ and an integer $k$, the $k$-core of $G$ is the maximal
subgraph $g$ of $G$ such that every vertex $u \in V(g)$ has degree
$d_g(u) \geq k$ in the subgraph $g$.
\end{definition}

$k$-core is a {\em vertex-induced} subgraph.
The degeneracy ordering can be computed in $\bigo(m)$ time by the
peeling algorithm~\cite{JACM83:Matula,Book18:Chang}, which iteratively
removes the vertex with {\em the smallest degree} from the graph and appends
it to the end of the ordering. Note that, although the $k$-core can also
be computed by the peeling algorithm, it is usually more efficient to
directly compute the $k$-core by iteratively removing vertices of
{\em degree smaller than $k$} from the graph~\cite{Book18:Chang}.
The largest $k$ such that $G$ contains a non-empty $k$-core is known as the
\textbf{\em degeneracy} of $G$, denoted $\delta(G)$; note that
$\delta(G) \leq \sqrt{m}$~\cite{JEA13:Eppstein}.
For the graph in \cfig\ref{fig:graph}, $(v_7, v_1, v_2, v_3, v_4, v_5,
v_6, v_8, v_9, v_{10}, v_{11}, v_{12})$ is a degeneracy ordering.
The entire graph is a $3$-core, and the subgraph obtained by removing
$v_7$ is a $4$-core; $\delta(G) = 4$, as it has no $5$-core.

\begin{definition}[$k$-truss~\cite{PVLDB12:Wang}]
\label{definition:k_truss}
Given a graph $G$ and an integer $k$, the $k$-truss of $G$ is the
maximal subgraph $g$ of $G$ such that every edge $(u,v) \in E(g)$
participates in at least $k-2$ triangles, \ie, 
$|N_g(u)\cap~N_g(v)| \geq k-2, \forall (u,v) \in E(g)$.
\end{definition}

$k$-truss is a subgraph of the $(k-1)$-core, and is an {\em
edge-induced} subgraph.
$k$-truss can be considered as a higher-order version of
$k$-core. That is, each edge corresponds to a node, and each triangle
corresponds to a hyper-edge, in a hyper-graph. Hence, the $k$-truss can
be computed in a similar way to $k$-core, but the time complexity becomes
$\bigo(\delta(G)\times m)$~\cite{PVLDB12:Wang}.
For the graph in \cfig\ref{fig:graph}, the entire graph is a $3$-truss,
the subgraph obtained by removing edges $\{(v_7,v_1), (v_7,v_6),
(v_7,v_5)\}$ (and thus also vertex $v_7$) is a $4$-truss, and the
subgraph induced by vertices $\{v_8,v_9,\ldots,v_{12}\}$ is a $5$-truss
which is contained in the $4$-core.

\section{Our Approach}
\label{sec:approach}

In this section, we propose an efficient algorithm \kdc for exact
maximum $k$-defective clique computation. As the maximum $k$-defective
clique computation problem is NP-hard~\cite{STOC78:Yannakakis}, our
algorithm \kdc, as well as all other exact algorithms, will run in
exponential time in the worst case. Nevertheless, our algorithm \kdc
beats the trivial time complexity of $\bigo^*(2^n)$ where the $\bigo^*$
notation hides polynomial factors. Specifically, we prove that our
algorithm \kdc runs in $\bigo^*(\gamma_k^n)$ time where $\gamma_k < 2$
is the largest real root of the equation $x^{k+3} - 2x^{k+2}+1=0$;
note that this improves the state-of-the-art time complexity
$\bigo^*(\gamma_{2k}^n)$~\cite{COR21:Chen}, as
$\gamma_k$ increases regarding $k$ (\ie, $\gamma_k < \gamma_{2k}$).

In the following, we first in \csec\ref{sec:framework} present the
framework of \kdc and prove its time complexity. Then, we in
\csec\ref{sec:upper_bounds} propose upper bounds and
reduction rules to improve the practical performance of \kdc.
Lastly, we in \csec\ref{sec:initial} present a heuristic algorithm for
initially computing a large $k$-defective clique.

\subsection{The Framework of \kdc}
\label{sec:framework}

Our algorithm falls into the category of branch-and-bound search (also
known as backtracking) algorithms;
%
we will use the terms {\em backtracking} and {\em branch-and-bound
search} interchangeably. The general idea is as follows.
Let $(g,S)$ denote an instance of the backtracking, where $g$ is
a graph and $S \subseteq V(g)$ is a $k$-defective clique in $g$.
The goal of solving an instance is to find the largest $k$-defective
clique in the instance; {\em a $k$-defective clique is said to be
in the instance $(g,S)$ if it is in $g$ and contains $S$}. To solve
the instance $(g,S)$, a backtracking algorithm will
select a branching vertex $b \in V(g)\setminus S$, and then recursively
solve two new instances that are generated based on $b$: one
instance includes $b$ into $S$, and the other removes $b$ from $g$
(and thus excludes $b$ from being added into $S$). 
Solving the instance $(G,\emptyset)$ thus finds the maximum $k$-defective
clique in $G$.
All the instances that are generated in solving the instance
$(G,\emptyset)$ form a binary search tree, a snippet of which
is shown in \cfig\ref{fig:search_tree}. 
Each node of the search tree, denoted by $I_i$, represents an
instance of the backtracking (\ie, $I_i = (g_i, S_i)$), and has two
children representing the two new instances that are generated based on
the branching vertex of the instance $I_i$.
For example, in \cfig\ref{fig:search_tree}, the branching vertex
selected for the instance $I_0$ is $b_0$, and the two new instances that are
generated based on $b_0$ are $I_1$ (which includes $b_0$ into the
solution) and $I_{q+1}$ (which removes $b_0$ from the graph); the
actions of including $b_0$ and removing $b_0$ are, respectively,
represented as labels $+b_0$ and $-b_0$ on the corresponding edges in
the search tree.

\begin{figure}[t]
\centering
\includegraphics[scale=0.3]{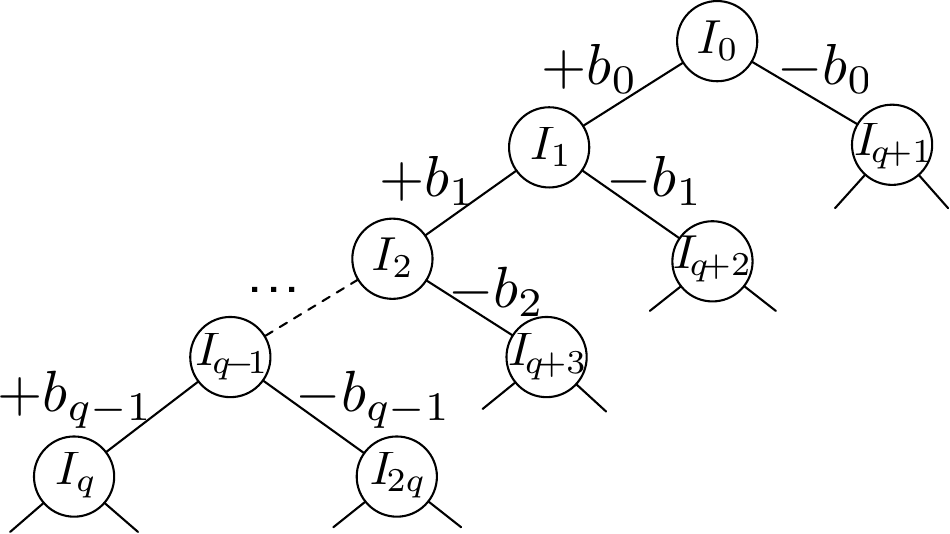}
\caption{A snippet of the (binary) search tree $\gT$ of a backtracking
algorithm}
\label{fig:search_tree}
\end{figure}

Backtracking algorithms differ from each other in three aspects:
\vspace{-2pt}
\begin{itemize}
	\item {\bf Branching techniques} determine which vertex is selected
		as the branching vertex, \eg, $b_0$ for the instance $I_0$ and $b_1$
		for the instance $I_1$ in \cfig\ref{fig:search_tree}.
	\item {\bf Reducing techniques} reduce the size of an instance, \ie,
		transform an instance $(g,S)$ to another equivalent instance
		$(g',S')$ with $|V(g')\setminus S'| \leq |V(g) \setminus
		S|$.\footnote{Note that, reducing techniques could also
			remove edges from the graph, \eg,
			the reduction rule {\bf RR6} in
			\csec\ref{sec:reduction_rules}.
			We omit the discussions here for
		simplicity.}
	\item {\bf Upper bounding techniques} prune an instance, as well as
		the entire search subtree rooted at the instance, if a computed
		upper bound of the largest $k$-defective clique in the instance
		is no larger than the best solution found so far.
\end{itemize}

In this paper, we propose new techniques from all the three aspects for
the problem of maximum $k$-defective clique computation. In this
subsection, we only present the techniques that are required to achieve our
time complexity of $\bigo^*(\gamma_k^n)$, and defer other practical
techniques to Sections~\ref{sec:upper_bounds} and \ref{sec:initial}.

\subsubsection{Techniques for Achieving Our Time Complexity}

We first propose the following
\highlight{non-fully-adjacent-first branching rule (\textbf{BR}), which
	prefers branching on a vertex that is not fully
adjacent to $S$}.
\begin{description}
	\item[BR \highlight{(non-fully-adjacent first branching rule)}.] Given an instance $(g,S)$, the branching vertex is
		selected as the one of $V(g)\setminus S$ that has at least one
		non-neighbor in $S$; if all vertices of $V(g)\setminus S$ are
		adjacent to all vertices of $S$, then the branching vertex is an
		arbitrary vertex of $V(g)\setminus S$.
\end{description}
Note that, the way of selecting a branching vertex will not compromise
the correctness of the algorithm, {\em as long as the union of $S$ and the
branching vertex forms a $k$-defective clique}.
%
To achieve our time complexity, we also propose the
following two reduction rules.
\begin{description}
	\item[RR1 \highlight{(excess-removal reduction rule)}.] Given an instance $(g,S)$, for a vertex $u \in
		V(g)\setminus S$ satisfying $|\overline{E}(S\cup u)| > k$, we
		remove $u$ from $g$.
	\item[RR2 \highlight{(high-degree reduction rule)}.] Given an instance $(g,S)$, for a vertex $u \in
		V(g)\setminus S$ satisfying $|\overline{E}(S\cup u)| \leq k$ and
		$d_g(u) \geq |V(g)|-2$, we greedily add $u$ to $S$.
\end{description}
The reduction rule {\bf RR1} ensures that the union of $S$ and any
branching vertex (including the one selected by our branching rule {\bf
BR}) form a valid
$k$-defective clique, by noting that {\em all reduction rules are
applied before the branching rule}.
The correctness of the reduction rule {\bf RR1} is trivial, and we prove
the correctness for the reduction rule {\bf RR2} in the lemma below.

\begin{lemma}
Given an instance $(g,S)$, for a vertex $u \in V(g)\setminus S$
satisfying $|\overline{E}(S\cup u)| \leq k$ and $d_g(u) \geq |V(g)|-2$,
there is a maximum $k$-defective clique in the instance that contains $u$.
\end{lemma}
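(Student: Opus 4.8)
The plan is a standard exchange argument. I would start from an arbitrary maximum $k$-defective clique $C$ in the instance $(g,S)$ (so $S \subseteq C \subseteq V(g)$ and $|\overline{E}(C)| \leq k$), and show that either $C$ already contains $u$, or $C$ can be transformed into another $k$-defective clique $C'$ in the instance with $|C'| = |C|$ and $u \in C'$. The whole argument hinges on the hypothesis $d_g(u) \geq |V(g)|-2$, which says $u$ has at most one non-neighbor in $g$; hence $u$ has at most one non-neighbor inside any vertex set, and in particular $|\overline{N}_C(u)| \leq 1$.

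First I would dispose of the easy case. If $u$ has no non-neighbor in $C$, then $\overline{E}(C\cup u) = \overline{E}(C)$, so $C\cup u$ is a strictly larger $k$-defective clique in the instance, contradicting the maximality of $C$; thus this case cannot occur when $u \notin C$. So I may assume $u \notin C$ and $\overline{N}_C(u) = \{w\}$ for a single vertex $w \in C$, which is then $u$'s unique non-neighbor in all of $g$. Adding $u$ now creates exactly one extra non-edge, i.e. $|\overline{E}(C\cup u)| = |\overline{E}(C)| + 1$. Maximality of $C$ forces $|\overline{E}(C\cup u)| > k$ (otherwise $C\cup u$ would again beat $C$), hence $|\overline{E}(C)| = k$ exactly.

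The crux is to find a vertex $z \in C\setminus S$ that is incident to at least one non-edge of $C\cup u$; removing it and inserting $u$ gives the desired $C' = (C\cup u)\setminus z$, because deleting $z$ destroys at least one non-edge, so $|\overline{E}(C')| \leq (k+1)-1 = k$, while $|C'|=|C|$, $S\subseteq C'$ (as $z\notin S$), and $u\in C'$ (as $z\neq u$). To locate $z$ I split on whether $w\in S$. If $w\notin S$, then $w$ itself lies in $C\setminus S$ and is incident to the non-edge $(u,w)$, so $z=w$ works. If $w\in S$, I instead invoke the hypothesis $|\overline{E}(S\cup u)| \leq k$: since $(u,w)$ is the only non-edge between $u$ and $S$, this gives $|\overline{E}(S)| \leq k-1 < k = |\overline{E}(C)|$, and because $\overline{E}(S)\subseteq \overline{E}(C)$ there must be a non-edge of $C$ with an endpoint in $C\setminus S$; that endpoint is the required $z$.

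I expect the only real obstacle to be the subcase $w\in S$: there the naive ``swap out the non-neighbor'' move is illegal because it would remove a vertex of $S$ from the solution. Handling it is exactly where the second hypothesis $|\overline{E}(S\cup u)| \leq k$ (not merely $|\overline{E}(C)| \leq k$) is needed, as it guarantees that $S$ has strictly fewer than $k$ non-edges and therefore that some surplus non-edge of $C$ touches a freely removable vertex outside $S$.
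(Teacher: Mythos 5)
Your proposal is correct and follows essentially the same exchange argument as the paper: dispose of the case where $u$'s unique non-neighbor is absent from $C$, then swap $u$ in for a removable vertex of $C\setminus S$, splitting on whether that non-neighbor lies in $S$ and using $|\overline{E}(S\cup u)|\leq k$ precisely in the latter case. Your counting step $|\overline{E}(S)|\leq k-1 < k = |\overline{E}(C)|$ is a slightly more quantitative way of locating the removable vertex than the paper's contradiction argument, but the structure is identical.
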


\begin{proof}
The case of $d_g(u) = |V(g)|-1$ (\ie, $u$ is adjacent to all other
vertices in $g$) is trivial.
Let's focus on the case of $d_g(u) = |V(g)|-2$ and consider a maximum
$k$-defective clique $C$ in the instance that does not contain $u$, \ie,
$u \notin C$ and $S \subseteq C \subseteq V(g)$. Let $v$ be the unique
non-neighbor of $u$ in $g$. Then, $v$ must be in $C$, as otherwise
$C\cup u$ would be a $k$-defective clique of size larger than $C$. We
consider two cases depending on whether $v \in S$.

\noindent
\underline{Case-I: $v \notin S$.} That is, $v \in C\setminus S$. It is
easy to verify that $C\cup u \setminus v$ is a $k$-defective clique of
the same size as $C$ and contains $u$ and $S$.

\noindent
\underline{Case-II: $v \in S$.}
There must exist a vertex of $C\setminus S$ that has at least one
non-neighbor in $C$, since otherwise $C\cup \{u\}$ would also be a valid
$k$-defective clique by noting that $|\overline{E}(S\cup u)| \leq k$;
let $w$ be such a vertex of $C\setminus S$. It is easy to verify
that $C\cup u \setminus w$, which contains $u$ and $S$, is a
$k$-defective clique of the same size as $C$.
\end{proof}

\begin{algorithm}[htb]
\small
\caption{$\kdct(G, k)$}
\label{alg:kdct}
\KwIn{A graph $G$ and an integer $k$}
\KwOut{A maximum $k$-defective clique $C^*$ of $G$}

\vspace{1.5pt}
\State{$C^* \la \emptyset$}
\State{$\bbsearcht(G,\emptyset)$}
\Return{$C^*$}

\vspace{1.5pt}
\nonl \textbf{Procedure} $\bbsearcht(g,S)$ \\
\State{$(g',S') \la $ apply reduction rules {\bf RR1} and {\bf RR2} to $(g,S)$}
\lIf{$g'$ is a $k$-defective clique}{update $C^*$ by $V(g')$ and
\textbf{return}}
	\StateCmt{$b\la$ a vertex of $V(g')\setminus S'$ that has at least
		one non-neighbor in $S'$}{\footnotesize If there is no such a vertex, then
	$b$ is an arbitrary vertex of $V(g')\setminus S'$}
	\StateCmt{$\bbsearcht(g', S'\cup b)$}{\footnotesize Left branch includes $b$}
	\StateCmt{$\bbsearcht(g'\setminus b, S')$}{\footnotesize Right branch excludes
	$b$}
\end{algorithm}

\highlight{Based on the above discussions, the pseudocode of our 
algorithm \kdct is shown in \calg\ref{alg:kdct}; here $\kw{t}$
stands for theoretical as the algorithm only considers the theoretical
aspect. It takes a graph $G$ and an integer $k$ as input, and outputs a
maximum $k$-defective clique $C^*$ of $G$ which is achieved by
recursively invoking \bbsearcht to grow a partial solution $S$ that is
initialized as $\emptyset$ (Line~2).
In the procedure $\bbsearcht$, we first apply reduction rules {\bf RR1}
and {\bf RR2} to reduce the instance $(g,S)$ to a potentially smaller
instance $(g',S')$ satisfying $V(g')\subseteq V(g)$ and $S' \supseteq S$
(Line~4). 
%
If $g'$ itself is a $k$-defective clique, then we update the currently
found largest $k$-defective clique $C^*$ by $V(g')$ and backtrack
(Line~5).
Otherwise, we pick a branching vertex $b$ based on our branching rule
\textbf{BR} (Line~6), and then generate two new instances of \bbsearcht
and go into recursion (Lines~7--8).
}

\begin{figure}[ht]
\centering
\includegraphics[scale=.8]{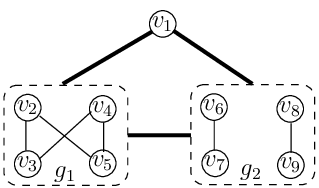}
\caption{\highlight{Running example for \calg\ref{alg:kdct} (thick edge
indicates full connection between the subgraphs)}}
\label{fig:kdct}
\end{figure}

\highlight{
\begin{example}
\label{example:kdct}
Consider the example graph in \cfig\ref{fig:kdct} where thick edge
indicates full connection between the corresponding subgraphs, \ie,
$v_1$ is adjacent to every other vertex, and every vertex of $g_1$ is
adjacent to every vertex of $g_2$. Suppose $k=3$, when invoking
\bbsearcht with $g=G$ and $S=\emptyset$, our reduction rule {\bf RR2}
will greedily and iteratively move $v_1,v_2,v_3,v_4,v_5$ to $S$. Then,
an arbitrary vertex of $\{v_6,\ldots,v_9\}$ can be selected as the
branching vertex; suppose $v_6$ is selected. The newly
generated left branch would have $S_1 = \{v_1,\ldots,v_6\}$, and the
reduction rules {\bf RR1} and {\bf RR2} would have no effect for $S_1$;
note that the graph $g$ will remain unchanged in the remaining part of
this example. The branching vertex selected for $S_1$ could be either
$v_8$ or $v_9$ (as they are not fully adjacent to $S_1$) but not $v_7$
(which is fully adjacent to $S_1$); suppose $v_8$ is selected. The newly
generated left branch for $S_1$ would have $S_2 =
\{v_1,\ldots,v_6,v_8\}$ which contains three non-edges, and thus the
reduction rule {\bf RR1} will remove $v_7$ and $v_9$ from the graph.
\end{example}
}

\subsubsection{Time Complexity Analysis of \calg\ref{alg:kdct}}

To analyze the time complexity of \calg\ref{alg:kdct}, we
consider the search tree $\gT$ of recursively invoking \bbsearcht, 
as shown in \cfig\ref{fig:search_tree}. To avoid confusion, we
refer to nodes of the search tree by {\em nodes}, and vertices of a
graph by {\em vertices}.
Recall that each node of $\gT$ represents an instance of \bbsearcht, \ie,
$(g,S)$, and has two children: the left child includes the branching
vertex $b$ to $S$, and the right child excludes $b$ from $g$. 
%
It is worth mentioning that each child may also include or exclude other
vertices due to applying reduction rules \textbf{RR1} and
\textbf{RR2}.
%
We use $I, I', I_0, I_1,\ldots$ to denote nodes of $\gT$, and use $I.g$
and $I.S$ to respectively denote the graph $g$ and the partial solution
$S$ of the \bbsearcht instance to which $I$ corresponds.
We would like to emphasize that \textbf{\em
$I.g$ and $I.S$ denote the ones obtained after applying the
reduction rules} at Lines~4--5 of \calg\ref{alg:kdct}, not the ones input
to \bbsearcht; note that Line~5 can be regarded as applying the
following reduction rule:
\begin{itemize}
	\item If $g'$ is a $k$-defective clique, then all vertices of
		$V(g')\setminus S'$ are moved to $S'$.
\end{itemize}
In this case, the instance will not generate any
children (\ie, any new instances) and thus becomes a leaf node.
We measure the size of $I$ by the number of vertices in the graph $I.g$
that are not in the partial solution $I.S$, \ie, $|I| = |V(I.g)|-|I.S| =
|V(I.g)\setminus I.S| \geq 0$. It is easy to see that {\em $|I'| \leq |I|-1$
whenever $I'$ is a child of $I$ --- \eg, the branching vertex $b$ of $I$ is in
$V(I.g)\setminus I.S$ but not in $V(I'.g)\setminus I'.S$ --- and $|I|=0$
whenever $I$ is a leaf node}.

Before proving the time complexity, we first state
the following important property of exhaustively applying the
reduction rules {\bf RR1} and {\bf RR2}, whose proof is omitted due to
space limitation.

\begin{lemma}
\label{lemma:property_RR1_RR2}
After exhaustively applying the reduction
rules \textbf{RR1} and \textbf{RR2}, the resulting instance $(g,S)$
satisfies the following condition:
\begin{itemize}
	\item For every vertex $u \in V(g)\setminus S$, it holds that
		$|\overline{E}(S\cup u)| \leq k$ and $d_g(u) < |V(g)|-2$.
\end{itemize}
\ie, \textbf{all vertices of $V(g)\setminus S$ have at least two
non-neighbors in $g$}.
\end{lemma}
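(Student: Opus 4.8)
The plan is to show that exhaustively applying the two reduction rules leaves every vertex $u \in V(g)\setminus S$ simultaneously satisfying $|\overline{E}(S\cup u)| \leq k$ and $d_g(u) < |V(g)|-2$. The natural approach is to argue that once the rules have been applied to a fixed point (\ie, no further application of either rule changes the instance), neither of the two defining conditions can be violated, because a violation would immediately trigger one of the rules. Concretely, I would prove the contrapositive for each rule: if some vertex violated one of the two stated properties, then one of \textbf{RR1} or \textbf{RR2} would still be applicable, contradicting exhaustiveness.

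First I would handle the bound $|\overline{E}(S\cup u)| \leq k$. Suppose, for contradiction, that some $u \in V(g)\setminus S$ has $|\overline{E}(S\cup u)| > k$. By the definition of \textbf{RR1}, this vertex is removable, so the instance is not a fixed point, a contradiction. Hence every remaining $u$ satisfies $|\overline{E}(S\cup u)| \leq k$. Next I would handle the degree condition $d_g(u) < |V(g)|-2$. Suppose some $u \in V(g)\setminus S$ has $d_g(u) \geq |V(g)|-2$; by the first part we already know $|\overline{E}(S\cup u)| \leq k$, so $u$ satisfies exactly the precondition of \textbf{RR2}, which would greedily move $u$ into $S$, again contradicting that the instance is a fixed point. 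Therefore every remaining $u$ has $d_g(u) < |V(g)|-2$, which (since $d_g(u)$ counts neighbors of $u$ within $g$ and $u$ is neither its own neighbor nor non-neighbor) is equivalent to saying $u$ has at least two non-neighbors in $g$.

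The one subtlety I would be careful about is that the two rules interact: removing a vertex via \textbf{RR1} or adding one via \textbf{RR2} changes $g$, $S$, the quantities $\overline{E}(S\cup u)$, and the degrees $d_g(u)$ for the other vertices, so the argument must be made at the fixed point of the combined process rather than after a single pass. I would therefore phrase the statement as a property of the terminal instance obtained when neither rule fires, and note that the process terminates because each application of \textbf{RR1} strictly decreases $|V(g)|$ and each application of \textbf{RR2} strictly decreases $|V(g)\setminus S|$, so the quantity $|V(g)\setminus S|$ (or the pair ordered lexicographically) is a monotone termination measure.

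The main obstacle I anticipate is not the logical core---which is a short contrapositive argument for each condition---but rather making precise that ``exhaustively applied'' means reaching a joint fixed point of both rules, and confirming that reaching such a fixed point is well-defined (\ie, the order of applications does not leave a residual applicable rule). I would address this by observing that the terminal instance is, by definition of exhaustive application, one in which neither \textbf{RR1} nor \textbf{RR2} is applicable; the two bullet conditions are then exactly the negations of the preconditions of \textbf{RR1} and \textbf{RR2}, respectively, so the claim follows immediately once termination and the fixed-point characterization are established.
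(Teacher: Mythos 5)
Your proof is correct: the paper itself omits the proof of this lemma "due to space limitation," and the intended argument is precisely the fixed-point/contrapositive one you give --- at an instance where neither rule is applicable, the two bullet conditions are exactly the negations of the preconditions of \textbf{RR1} and \textbf{RR2}, and $d_g(u) < |V(g)|-2$ translates to $u$ having at least $|V(g)|-1-d_g(u)\geq 2$ non-neighbors in $g$. Your termination observation (that $|V(g)\setminus S|$ strictly decreases under either rule) is also correct and completes the argument.
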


We are now ready to prove the time complexity of \calg\ref{alg:kdct}
in the following lemma and theorem.

\begin{lemma}
	\label{lemma:leafs}
	Let $\gT$ be  the search tree of running \calg\ref{alg:kdct} (\ie,
	recursively invoking \bbsearcht). For any node $I$ of $\gT$, the
	number of leaf nodes in the subtree of $\gT$ rooted at $I$, denoted
	$\ell(I)$, is at most $\gamma_k^{|I|}$, where $1 <
	\gamma_k < 2$ is the largest real root of the equation
	$x^{k+3}-2x^{k+2}+1=0$.
\end{lemma}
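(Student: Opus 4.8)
The plan is to prove the bound $\ell(I) \le \gamma_k^{|I|}$ by strong induction on $|I|$, where the recurrence relating leaf counts of an internal node to its children's is driven by a careful analysis of how the branching vertex's non-neighbors propagate down the left branch. The base case is a leaf node, where $|I|=0$ and $\ell(I)=1 \le \gamma_k^0 = 1$. For the inductive step, I would consider an internal node $I$ with branching vertex $b$, left child $I_L$ (which adds $b$ to $S$) and right child $I_R$ (which removes $b$ from $g$), and write $\ell(I) = \ell(I_L) + \ell(I_R)$.

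The \textbf{key insight} driving the time complexity must be that branching on a non-fully-adjacent vertex $b$ (guaranteed by rule \textbf{BR} and Lemma~\ref{lemma:property_RR1_RR2}, which ensures every vertex of $V(g)\setminus S$ has at least two non-neighbors) forces substantial progress in the left branch. Specifically, when $b$ is added to $S$, the non-edges incident to $b$ accumulate in $\overline{E}(S)$; since at most $k$ non-edges are tolerated, the process of repeatedly branching on non-fully-adjacent vertices along a left-branch path can happen at most roughly $k$ times before $\overline{E}(S)$ is saturated and rule \textbf{RR1} begins purging vertices wholesale. The first step, then, is to establish the recurrence $\ell(I) \le \ell(I_L) + \ell(I_R)$ together with size bounds $|I_L| \le |I| - t - 1$ and $|I_R| \le |I| - 1$, where $t \ge 1$ counts the non-neighbors of $b$ in $V(g)\setminus S$ that get removed (or the vertices forced in) as a consequence of adding $b$. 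I would then argue that the characteristic equation $x^{k+3} - 2x^{k+2} + 1 = 0$ arises precisely as the worst-case branching vector: assuming $\ell(I_L) \le \gamma_k^{|I|-t-1}$ and $\ell(I_R) \le \gamma_k^{|I|-1}$ by induction, it suffices to verify $\gamma_k^{|I|-t-1} + \gamma_k^{|I|-1} \le \gamma_k^{|I|}$, i.e. $\gamma_k^{-t-1} + \gamma_k^{-1} \le 1$, which the defining equation guarantees for the critical value $t$ tied to $k$.

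The \textbf{main obstacle} will be pinning down exactly which branching vector governs the worst case and showing it matches the exponent $k+3$ in the characteristic equation rather than some other value. The subtlety is that a single branching step on $b$ may only reduce $|I_L|$ by one or two (when $b$ has few non-neighbors outside $S$), so the factor-of-$\gamma_k^{k+3}$ saving cannot come from one step alone; instead I expect it must be amortized over a \emph{chain} of consecutive left-branches. The clean way to handle this is to bound the number of leaves not by a one-step recurrence but by tracking the budget $k - |\overline{E}(S)|$ as a potential: each left-branch on a non-fully-adjacent vertex consumes at least one unit of this budget, and once the budget is exhausted the instance collapses to a clique-like structure that rule \textbf{RR1}/\textbf{RR2} resolves into a single leaf. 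Translating this potential argument into the algebraic inequality $\gamma_k^{-(k+2)} + \gamma_k^{-1} \le 1$ — equivalently $1 + \gamma_k^{k+1} \le \gamma_k^{k+2}$, which rearranges to $\gamma_k^{k+3} - 2\gamma_k^{k+2} + 1 = 0$ after multiplying through and using the definition — is where I anticipate the real work lies.

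I would close by verifying that $\gamma_k$ is well-defined: the polynomial $f(x) = x^{k+3} - 2x^{k+2} + 1$ satisfies $f(1) = 0$ trivially but more importantly $f(2) = 1 > 0$ and $f$ has a root strictly between $1$ and $2$, giving $1 < \gamma_k < 2$ as claimed, so the inductive hypothesis $\ell(I') \le \gamma_k^{|I'|}$ is meaningful and the exponential base beats $2$.
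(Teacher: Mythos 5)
Your high-level plan — amortize over a maximal chain of consecutive left branches and use the budget $k-|\overline{E}(S)|$ as a potential — is the same idea as the paper's proof, but two essential pieces are missing or wrong. First, the algebraic closing step is incorrect. The defining equation $x^{k+3}-2x^{k+2}+1=0$ is equivalent (after dividing by $x-1$) to $x^{k+2}=x^{k+1}+x^k+\cdots+x+1$, i.e.\ to the \emph{full} $(k+2)$-term inequality $\sum_{i=1}^{k+2}\gamma_k^{-i}\le 1$. This is the inequality you need, because the correct recurrence is not binary: writing $(I_0=I,I_1,\dots,I_q)$ for the chain of left children up to the first node whose size drops by at least $2$, one has $\ell(I)=\ell(I_q)+\sum_{i=0}^{q-1}\ell(\text{right child of }I_i)$, with the right child of $I_i$ having size at most $|I|-i-1$ and $|I_q|\le|I|-q-1$, so $\ell(I)\le\gamma_k^{|I|-1}+\cdots+\gamma_k^{|I|-q}+\gamma_k^{|I|-q-1}$, a sum of $q+1\le k+2$ terms. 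Your proposed two-term target $\gamma_k^{-(k+2)}+\gamma_k^{-1}\le 1$ does not "rearrange to" the defining equation: multiplying $x^{k+2}-x^{k+1}-1\ge 0$ by $(x-1)$ yields $x^{k+3}-2x^{k+2}+x^{k+1}-x+1$, not $x^{k+3}-2x^{k+2}+1$, and a single branching step does not shrink the left child by $k+3$ in any case (as you yourself note).

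Second, the combinatorial heart of the proof — that the chain length satisfies $q\le k+1$ — is asserted but not established, and the subtle case is exactly the one your budget accounting misses. Along the chain, rule \textbf{BR} may initially pick branching vertices that are \emph{fully adjacent} to $S$ (when no vertex of $V(g)\setminus S$ has a non-neighbor in $S$); those steps consume none of your potential, so "each left-branch consumes at least one unit of budget" is false for them. The paper handles this by letting $I_x$ be the last node on the chain whose branching vertex is fully adjacent to $I_x.S$: every vertex added before $I_x$ has, by Lemma~\ref{lemma:property_RR1_RR2}, at least two non-neighbors, and by the definition of $I_x$ all of those non-neighbors already lie in $I_x.S$, giving $|\overline{E}(I_x.S)|\ge x$ and hence $x\le k$; after $I_x$, rule \textbf{BR} guarantees each step adds at least one non-edge, so $|\overline{E}(I_q.S)|\ge q-1$ and $q\le k+1$. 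Without this two-phase argument (which is precisely where \textbf{RR2}'s "at least two non-neighbors" guarantee earns the improvement from $\gamma_{2k}$ to $\gamma_k$), the chain-length bound, and with it the exponent $k+2$ in the recurrence, does not follow. You would also need to justify that the terminating node $I_q$ exists (leaves have size $0$ while $I.g$ is not a $k$-defective clique) and that the reduction rules have no effect on the intermediate nodes $I_1,\dots,I_{q-1}$, which is what makes the budget accounting along the chain valid.
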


\begin{proof}
We prove the lemma by induction.
For the base case that $I$ is a leaf node, it is trivial that $\ell(I) =
1 \leq \gamma_k^{|I|}$ since $\gamma_k > 1$ and $|I| = 0$.
For a non-leaf node $I$, for any path $(I_0=I,I_1,\ldots,I_{q-1},I_q)$
with $q \geq 1$ that starts from $I$ and always visits the left child in
the search tree $\gT$, it is trivial that 
\[
	\ell(I) = \ell(I_{q+1}) + \ell(I_{q+2}) + \cdots + \ell(I_{2q}) +
	\ell(I_q)
\]
here, $I_{q+1},I_{q+2},\ldots,I_{2q}$ are the right child of
$I_0,I_1,\ldots,I_{q-1}$, respectively, as illustrated in
\cfig\ref{fig:search_tree}.
To bound $\ell(I)$, let's specifically consider the path $(I_0=I,
I_1,\ldots,I_q)$ where $I_q$ is the first node such that $|I_q| \leq
|I_{q-1}|-2$; this implies that for $1 \leq i < q$, $|I_i| =
|I_{i-1}|-1$ and consequently $V(I_i.g) = V(I.g)$ and the reduction
rules at Line~4 of \calg\ref{alg:kdct} have no effect on $I_{i-1}$.
Note that such a node $I_q$ always exists since (1)~$I.g$ is not a
$k$-defective clique (otherwise, $I$ would be a leaf node) and (2)~a
leaf node $I'$ satisfies $|I'|=0$ (\ie, $I'.S = V(I'.g) \neq V(I.g)$ and
thus $I'$ would satisfy the condition).
Hence, the following two facts hold.
\begin{description}
	\item[Fact 1.] $|I_i| \leq |I_{i-q-1}|-1 \leq |I| + q-i$, for $q+1 \leq
		i \leq 2q$.
	\item[Fact 2.] $|I_q| \leq |I_{q-1}|-2 \leq |I|-q-1$.
\end{description}

Now, we prove that the following fact also holds.
\begin{description}
	\item[Fact 3.] $q \leq k+1$.
\end{description}
We prove Fact 3 by contradiction. Suppose $q \geq k+2$. Let $I_x$ be the
last node, on the path $(I_0 = I, I_1, \ldots,I_x,\ldots, I_q)$,
satisfying the condition that all vertices of $V(I_x.g)\setminus I_x.S$
are adjacent to all vertices of $I_x.S$, \ie, the branching vertex $b_x$
selected for $I_x$ has no non-neighbor in $I_x.S$; without loss of
generality, we assume such an $I_x$ exists, otherwise the following
proof still holds by setting $x = 0$.
Then, $|\overline{E}(I_x.S)| \geq x$ since (1)~each branching vertex added
to $I_x.S$ along the path $(I_0=I, \ldots, I_{x-1})$ has at least two
non-neighbors (according to \clem\ref{lemma:property_RR1_RR2}), and
(2)~all these non-neighbors are in $I_x.S$ (according to the definition
of $I_x$); note that, according to the definition of $I_q$, the
reduction rules have no effect on $I_i$ for $1 \leq i < q$. This implies
that $x \leq k$.
Then, according to the definition of $I_x$ and our branching rule
\textbf{BR}, for each $i$ with $x+1\leq i < q$, the branching vertex
selected for $I_i$ has at least one non-neighbor in
$I_i.S$, and consequently, 
\[
	|\overline{E}(I_q.S)| \geq |\overline{E}(I_x.S)| + (q-x-1) \geq q-1 \geq k+1
\] 
contradicting that
$I_q.S$ is a $k$-defective clique. Hence, Fact 3 holds.

Based on Facts 1, 2 and 3, we have 
\begin{align}
	\ell(I) & = \ell(I_{q+1}) + \ell(I_{q+2}) + \cdots + \ell(I_{2q}) +
	\ell(I_q) \nonumber \\
	& \leq \gamma_k^{|I_{q+1}|} + \gamma_k^{|I_{q+2}|} + \cdots +
	\gamma_k^{|I_{2q}|} + \gamma_k^{|I_q|} \nonumber \\
	& \leq \gamma_k^{|I|-1} + \gamma_k^{|I|-2} + \cdots +
	\gamma_k^{|I|-q} + \gamma_k^{|I|-q-1} \nonumber \\
	& \leq \gamma_k^{|I|-1} + \gamma_k^{|I|-2} + \cdots +
	\gamma_k^{|I|-k-1} + \gamma_k^{|I|-k-2}
	\label{eq:time_complexity_proof}
\end{align}
where $\gamma_k^{|I|-1} + \gamma_k^{|I|-2} + \cdots + \gamma_k^{|I|-k-1}
+ \gamma_k^{|I|-k-2} \leq \gamma_k^{|I|}$ if $\gamma_k$ is
no smaller than the largest real root of the equation $x^{k+2} -
x^{k+1} -\cdots - x - 1=0$ which is equivalent to the equation
$x^{k+3}-2x^{k+2}+1=0$~\cite{Book2010:Fomin}. The first few solutions to
the equation are $\gamma_0 = 1.619$, $\gamma_1 = 1.840$, $\gamma_2 = 1.928$,
$\gamma_3=1.966$, $\gamma_4=1.984$ and $\gamma_5=1.992$.
\end{proof}

\begin{theorem}
\label{theorem:time_complexity}
Let $P_1$ be the time complexity of running Lines~4--6 of
\calg\ref{alg:kdct}, which is polynomial in $n$. Then, the
time complexity of \calg\ref{alg:kdct} is $\bigo(P_1\times
\gamma_k^n)$ and is $\bigo^*(\gamma_k^n)$, where $\gamma_k < 2$ is the
largest real root of the equation $x^{k+3} - 2x^{k+2} + 1 = 0$.
\end{theorem}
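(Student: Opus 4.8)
The plan is to charge the total running time of \calg\ref{alg:kdct} to the nodes of the search tree $\gT$, treating the per-node cost and the total number of nodes separately. At any single node $I$ of $\gT$, the work performed \emph{excluding} the recursive calls consists exactly of executing Lines~4--6 of \calg\ref{alg:kdct}: applying the reduction rules {\bf RR1} and {\bf RR2} (Line~4), testing whether the reduced graph is a $k$-defective clique and possibly updating $C^*$ (Line~5), and selecting the branching vertex (Line~6). By hypothesis this costs $\bigo(P_1)$, and Lines~7--8 only spawn the two recursive calls at $\bigo(1)$ overhead, which is absorbed into $P_1$. Hence the total time is $\bigo(P_1 \times |\gT|)$, where $|\gT|$ denotes the number of nodes of $\gT$, and it remains to bound $|\gT|$.

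The key structural observation is that $\gT$ is a \emph{full} binary tree: every non-leaf node generates exactly two children (the left branch including $b$ and the right branch excluding $b$), while leaf nodes generate none. For a full binary tree the number of internal nodes is exactly one less than the number of leaf nodes, so $|\gT| = 2\ell(I_{\mathrm{root}}) - 1$, where $I_{\mathrm{root}}$ is the root of $\gT$ and $\ell(\cdot)$ is the leaf-count function from \clem\ref{lemma:leafs}. Thus bounding $|\gT|$ reduces to bounding the number of leaves below the root.

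Finally I would invoke \clem\ref{lemma:leafs} at the root. The root corresponds to the top-level call $\bbsearcht(G,\emptyset)$, so after the reductions at Lines~4--5 we have $|I_{\mathrm{root}}| = |V(I_{\mathrm{root}}.g)| - |I_{\mathrm{root}}.S| \leq |V(G)| = n$. Since $1 < \gamma_k < 2$, the bound $\ell(I_{\mathrm{root}}) \leq \gamma_k^{|I_{\mathrm{root}}|} \leq \gamma_k^{n}$ follows, giving $|\gT| \leq 2\gamma_k^{n} - 1 = \bigo(\gamma_k^n)$. Combining with the per-node cost yields total time $\bigo(P_1 \times \gamma_k^n)$, and because $P_1$ is polynomial in $n$ this is $\bigo^*(\gamma_k^n)$, as claimed.

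This argument is essentially a bookkeeping step once \clem\ref{lemma:leafs} is in hand, so there is no genuine obstacle; the only points requiring care are that the per-node cost $P_1$ is incurred at \emph{every} node rather than only at leaves, and that the full-binary-tree identity $|\gT| = 2\ell(I_{\mathrm{root}})-1$ legitimately converts the leaf bound into a bound on the total number of nodes (and hence on the total number of times Lines~4--6 are executed).
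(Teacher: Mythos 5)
Your proposal is correct and follows essentially the same route as the paper's own proof: the paper likewise observes that $\gT$ is a full binary tree (so the node count is at most twice the leaf count bounded by \clem\ref{lemma:leafs}) and that each node costs $P_1$. Your write-up merely spells out the bookkeeping in more detail.
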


\begin{proof}
Firstly, as the search tree $\gT$ is a full binary tree, the total number of
nodes in the search tree is at most twice the number of its leaf nodes.
Secondly, it is easy to see that 
the time complexity of each node is $P_1$. Thus, the theorem holds.
\end{proof}

The existing best time complexity for the problem
of maximum $k$-defective clique computation is achieved by the algorithm
\madec~\cite{COR21:Chen}, which is $\bigo^*(\sigma_k^n)$ where
$\sigma_k$ is
the largest real root of $x^{2k+3} - 2x^{2k+2}+1=0$. It is easy to see
that $\sigma_k = \gamma_{2k}$. Thus, the time complexity of \madec
is $\bigo^*(\gamma_{2k}^n)$ and is higher than
our time complexity considering that $\gamma_k < \gamma_{2k}$.
\highlight{
Our algorithm has two main features that enable the improved time
complexity. Firstly, after exhaustively applying our reduction rules
{\bf RR1} and {\bf RR2}, every vertex in $V(g)\setminus S$ will have at
least two non-neighbors as stated in \clem\ref{lemma:property_RR1_RR2};
as a result, we can prove {\bf Fact 3} above. If we do not apply {\bf
RR2}, then we will only be able to bound $q$ by $2k+1$ for {\bf Fact 3}
and get the same time complexity as~\cite{COR21:Chen}. For example,
consider the graph in \cfig\ref{fig:kdct} again and suppose $k=2$ and
$I_0.S = \{v_1\}$. The branching vertex selected for $I_0$ could be $b_0
= v_2$, and we have $I_1.S = \{v_1,v_2\}$. The branching vertex selected
for $I_1$ could be $b_1=v_4$ and we get $I_2.S=\{v_1,v_2,v_4\}$.
Similarly, we could have $b_2=v_3$, $I_3.S=\{v_1,\ldots,v_4\}$,
$b_3=v_5$, $I_4.S=\{v_1,\ldots,v_5\}$, $b_4=v_6$ and
$I_5.S=\{v_1,\ldots,v_6\}$. Only in instance $I_5$, the reduction rules
will finally have any effect (\eg, remove vertices $\{v_8,v_9\}$); thus
$q=5$.
The second feature of our algorithm is our new branching rule.
We remark that after incorporating our reduction rules {\bf RR1} and {\bf
RR2} into~\cite{COR21:Chen}, its time complexity could be
improved to roughly $\bigo^*(\gamma_{3k/2}^n)$. However, this is still
higher than our time complexity. Thus, our branching rule is better than
that of \madec~\cite{COR21:Chen}.

It is interesting to observe that the state-of-the-art time complexity
for maximum $k$-plex computation is similar to $\bigo^*(\gamma_k^n)$;
specifically, it is
$\bigo^*(\gamma_{k-1}^n)$~\cite{AAAI20:Zhou,CIKM22:Dai}.
This is because an
inequality similar to \cequ{\ref{eq:time_complexity_proof}} is also
proved and utilized in~\cite{AAAI20:Zhou,CIKM22:Dai}. But, we remark
that our techniques and arguments to obtain
\cequ{\ref{eq:time_complexity_proof}} are
different from that of~\cite{AAAI20:Zhou,CIKM22:Dai} due to different
problem natures.
}

\begin{algorithm}[htb]
\small
\caption{$\kdc(G, k)$}
\label{alg:kdc}

\State{$C^* \la$ heuristically compute a large $k$-defective clique of $G$}
\State{$g \la $ apply reduction rules to reduce $G$}
\State{$\bbsearch(g, \emptyset)$}
\Return{$C^*$}
\end{algorithm}

\subsubsection{Incorporating Other Practical Techniques.}
\label{sec:practical_kdc}

\highlight{
\calg\ref{alg:kdct} is used for illustrating the bare minimum needed to 
achieve our time complexity of $\bigo^*(\gamma_k^n)$, and its practical
performance would not be satisfactory. Thus, we propose to
further incorporate other practical techniques, such as preprocessing,
upper bound-based pruning and more reduction rules, into our algorithm.
The pseudocode of our practically improved algorithm \kdc
is given in \calg\ref{alg:kdc}.
%
In \kdc, we first heuristically compute a large
$k$-defective clique of $G$ (Line~1),
whose details will be given in \csec\ref{sec:initial}.
Let $C^*$ be the heuristically computed $k$-defective clique. We then
use $|C^*|$ to reduce $G$ by removing unpromising vertices
and edges (Line~2); the details will be given in
\csec\ref{sec:time_complexity}.
After that, we go into backtracking by invoking the procedure \bbsearch.

\bbsearch is similar to \bbsearcht in \calg\ref{alg:kdct}, but with the
following additions.
Firstly, besides {\bf RR1} and {\bf RR2}, we also apply
other reduction rules at Line~4 of \calg\ref{alg:kdct}; these reduction
rules will be presented in \csec\ref{sec:reduction_rules}.
Secondly, before picking a branching vertex at Line~6 of
\calg\ref{alg:kdct}, we also compute an upper bound of the maximum
$k$-defective clique in the instance $(g',S')$, and prune the entire instance
if the computed upper bound is no larger than $|C^*|$; this implies that
no $k$-defective
clique in the instance $(g',S')$ is of size larger than $C^*$. Details
of the upper bound computation will be presented in
\csec\ref{sec:upper_bound}.

Let $P_2$ be the time complexity of running Lines~1--2 of
\calg\ref{alg:kdc}. The time complexity of \kdc is $\bigo(P_2 +
P_1\times \gamma_k^n)$ and is also $\bigo^*(\gamma_k^n)$. That is, applying
these additional techniques does not affect the exponential part of our
time complexity in \cthm\ref{theorem:time_complexity}.
}

\subsection{Upper Bounds and Reduction Rules}
\label{sec:upper_bounds}



\subsubsection{Upper Bounds}
\label{sec:upper_bound}

For the upper bound-based pruning,
we propose an improved graph coloring-based upper bound.
Before that, we first present the existing graph coloring-based upper bound
proposed in~\cite{COR21:Chen}, where the graph coloring is mainly used
to partition the vertices into independent sets.
Specifically, a coloring of a graph is assigning each vertex a color
such that for every edge in the graph, its two end-points
have different colors. Given an instance $(g,S)$ and a coloring of
$V(g)\setminus S$ with $c$ distinct colors $\{1,2,\ldots,c\}$, let
$\pi_1,\pi_2,\ldots,\pi_c$ be the partitioning of
$V(g)\setminus S$ based on their colors; that is, each $\pi_i$ consists of
all vertices with color $i$ and thus is an independent set. The
upper bound in~\cite{COR21:Chen} is
\begin{equation}
	\label{eq:color_bound_1}
	\textstyle
	|S| + \sum_{i=1}^c \min\left(\left\lfloor
	\frac{1+\sqrt{8k+1}}{2}\right\rfloor, |\pi_i|\right)
\end{equation}
This is based on the observation that an independent set with more than
$\lfloor \frac{1+\sqrt{8k+1}}{2}\rfloor$ vertices will miss more than $k$
edges and thus cannot be all contained in the same $k$-defective clique.
However, the upper bound computed by \cequ{\ref{eq:color_bound_1}} has
two deficiencies.
\begin{itemize}
	\item It considers the partitions $\pi_1,\ldots,\pi_c$
		independently, and thus would include much more vertices than
		necessary for computing the upper bound. For example, suppose
		$|\pi_i| \geq \lfloor \frac{1+\sqrt{8k+1}}{2}\rfloor$, $\forall 1
		\leq i \leq c$, then the upper bound becomes $|S| + c \cdot
		\lfloor \frac{1+\sqrt{8k+1}}{2}\rfloor$. But obviously $|S| +
		c + k$ is a much smaller upper bound (\eg, when $c$
		is large); this is because adding any $s_i$ vertices of $\pi_i$
		to $S$ will introduce at least $s_i-1$ non-edges.
	\item It does not consider the non-edges in $S$, and the
		non-edges between $S$ and $V(g)\setminus S$.
\end{itemize}
As a result, the upper bound computed by \cequ{\ref{eq:color_bound_1}}
is not tight.

\begin{figure}[ht]
\centering
\includegraphics[scale=1]{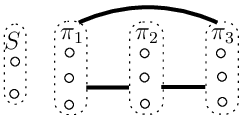}
\caption{Running example for graph coloring-based upper bound
	computation (the graph consists of $11$ vertices and $27$
		edges where $(\pi_1,\pi_2,\pi_3)$ forms a $3$-partite clique,
		\ie, all edges are between a vertex of $\pi_i$ and a vertex
of $\pi_j$ for $i \neq j$)}
\label{fig:upper_bound}
\end{figure}

\begin{example}
\label{example:upper_bound}
Consider the graph $g$ in \cfig\ref{fig:upper_bound} and the partial
solution $S$ consisting of two isolated vertices (\ie, without any
adjacent edges).
Besides $S$, the other part of $g$ is a $3$-partite clique with vertex sets
$\pi_1,\pi_2$ and $\pi_3$; note that, there is no edge between $S$ and
$V(g)\setminus S$. Thus, a graph coloring of $V(g)\setminus S$ would
assign all vertices of $\pi_i$ with color $i$, for $1 \leq i \leq 3$.
Suppose $k = 3$, then $\lfloor \frac{1+\sqrt{8k+1}}{2}\rfloor = 3$.
As $|\pi_1| = |\pi_2| = |\pi_3| = 3$, the graph coloring-based upper
bound computed by~\cequ{\ref{eq:color_bound_1}} is $2+3\times 3 = 11$.
However, it is easy to observe that the maximum $k$-defective clique in
the instance $(g,S)$ is of size only $3$, as we can only add one more vertex
without violating the $k$-defective clique definition.
\end{example}

In this paper, we still utilize the graph coloring-based partitioning
$\pi_1,\ldots,\pi_c$ for computing the upper bound, but we compute a
much tighter (\ie, smaller) upper bound than
\cequ{\ref{eq:color_bound_1}} by resolving the above two deficiencies,
as follows.
\begin{description}
	\item[UB1 \highlight{(improved coloring-based upper bound)}.] 
	For each partition $\pi_i$, we first sort its vertices into
	non-decreasing order regarding $|\overline{N}_S(\cdot)|$ and then
	define the weight of the \mbox{$j$-th} vertex in the sorted order, denoted
	$v_{i_j}$, to be $\w(v_{i_j}) = |\overline{N}_S(v_{i_j})| + j-1$,
	where the index $j$ starts from $1$.
	Finally, let $v_1,v_2,\ldots,$ be an ordering of $V(g)\setminus S$ in
	non-decreasing  order regarding their weights $\w(\cdot)$.
	The maximum $k$-defective clique in the instance $(g,S)$ is of size
	at most
	$|S|$ plus the largest $i$ such that $\sum_{j=1}^i \w(v_j) \leq k -
	|\overline{E}(S)|$.
\end{description}
The general idea is that (1)~adding any $s_i \geq 0$ vertices of $\pi_i$
to $S$ will introduce at least $\sum_{j=1}^{s_i} \w(v_{i_j}) \geq
\frac{s_i(s_i-1)}{2}$ non-edges, and (2)~$(s_1,s_2,\ldots,s_c)$ is
determined greedily and interdependently. It can be verified that the
upper bound computed by {\bf UB1} is at most (and can be much smaller
than) $|S| + c + k - |\overline{E}(S)|$, and is also no larger than that
computed by \cequ{\ref{eq:color_bound_1}}.

\begin{proof}[Proof of UB1]
For any $k$-defective clique $C$ in the instance $(g,S)$ that
contains $s_i$ vertices of $\pi_i$ for each $1 \leq i \leq c$, the
number of missing edges in $C$ is
\begin{align*}
	|\overline{E}(C)| & \textstyle \geq |\overline{E}(S)| + \sum_{i=1}^c
	\left( \frac{s_i(s_i-1)}{2} + \sum_{j=1}^{s_i}
	|\overline{N}_S(v_{i_j})|\right) \\
	& \textstyle = |\overline{E}(S)| + \sum_{i=1}^c \sum_{j=1}^{s_i}
	\left(|\overline{N}_S(v_{i_j})| + j-1\right) \\
	& \textstyle = |\overline{E}(S)| + \sum_{i=1}^c \sum_{j=1}^{s_i} \w(v_{i_j}) \\
	& \textstyle \geq |\overline{E}(S)| + \sum_{j=1}^{s_1+\cdots+s_c} \w(v_j)
\end{align*}
The first inequality follows from the fact that adding $s_i$ vertices
$S_i$ of $\pi_i$ to $S$ will introduce
\begin{itemize}
	\item at least $\frac{s_i(s_i-1)}{2}$ non-edges between vertices of
		$S_i$ (since $S_i$ is an independent set), and
	\item at least $\sum_{j=1}^{s_i} |\overline{N}_S(v_{i_j})|$
		non-edges between $S$ and $S_i$ (since the vertices of $\pi_i$
			are ordered such that $|\overline{N}_S(v_{i_1})| \leq
			|\overline{N}_S(v_{i_2})| \leq |\overline{N}_S(v_{i_3})| \leq
		\cdots$).
\end{itemize}
The second inequality follows from the fact that the vertices of
$V(g)\setminus S$ are ordered such that $\w(v_1) \leq \w(v_2) \leq
\w(v_3) \leq \cdots$.
Consequently, {\bf UB1} follows from the fact that $|\overline{E}(C)|
\leq k$ as $C$ is a $k$-defective clique.
\end{proof}

\begin{example}
Continue Example~\ref{example:upper_bound}. Now we show how {\bf
UB1} computes the upper bound. As $|\overline{N}_S(v)|
= 2$ for each $v \in V(g)\setminus S$, the weights of the vertices in
$\pi_i$ for $1 \leq i \leq 3$ are all $\{2,3,4\}$. Thus, the weights of
all vertices of $V(g)\setminus S$ are $\{2,2,2,3,3,3,4,4,4\}$. As
$|\overline{E}(S)| = 1$, the upper bound computed by {\bf UB1} is
$2+1=3$, which is significantly tighter than that computed in
Example~\ref{example:upper_bound}.
\end{example}

In addition, we also adopt the following two upper bounds from the
literature.
\begin{description}
	\item[UB2.] The maximum $k$-defective clique in the instance $(g,S)$
		is of size at most $\min_{u \in S} d_g(u) + 1 +
		k$~\cite{COR21:Chen}.
	\item[UB3.] Given an instance $(g,S)$, let $v_1,v_2,\ldots$ be an
		ordering of $V(g)\setminus S$ in non-decreasing order regarding
		their numbers of non-neighbors in $S$, \ie
		$|\overline{N}_S(\cdot)|$. The maximum $k$-defective
		clique in the instance $(g,S)$ is of size at most $|S|$ plus the
		largest $i$ such that $\sum_{j=1}^i |\overline{N}_S(v_j)| \leq k -
		|\overline{E}(S)|$~\cite{AAAI22:Gao}.
\end{description}

\subsubsection{Reduction Rules}
\label{sec:reduction_rules}

Besides the two reduction rules {\bf RR1} and {\bf RR2} presented in
\csec\ref{sec:framework}, we further propose two new reduction rules based
on the size of the currently found best solution (\ie, currently found
largest $k$-defective clique).
\textbf{\em Let $\lb$ be the size of the currently found best solution}, then
we will not be interested in any solution of size $\leq \lb$.
We first present and prove the reduction rule {\bf RR3} that is derived
from the upper bound {\bf UB3}.
\begin{description}
	\item[RR3 \highlight{(degree-sequence-based reduction rule)}.] Given an instance $(g,S)$, let $v_1,v_2,\ldots$ be an
		ordering of $V(g)\setminus S$ in non-decreasing order regarding
		their numbers of non-neighbors in $S$, \ie,
		$|\overline{N}_S(\cdot)|$. For a vertex $v_i$ with $i
> \lb - |S|$ and $|\overline{N}_S(v_i)| > k - |\overline{E}(S)| -
\sum_{j=1}^{\lb - |S|} |\overline{N}_S(v_j)|$, we can remove $v_i$ from $g$.
\end{description}

\begin{proof}[Proof of RR3]
Consider the instance $(g,S\cup v_i)$ which is obtained from $(g,S)$ by
adding $v_i \in V(g)\setminus S$ to $S$, and denote $S\cup v_i$ by $S'$.
Let $v'_1,v'_2,\ldots$ be an ordering of $V(g)\setminus S'$ in
non-decreasing order regarding their numbers of non-neighbors in $S'$.
{\bf UB3} states that the maximum $k$-defective clique in the instance
$(g,S')$ is of size at most $|S'|$ plus the largest $i'$ such that
$\sum_{j=1}^{i'} |\overline{N}_{S'}(v'_j)| \leq k - |\overline{E}(S')|$.
Note that, we have
\begin{align*}
	\textstyle \sum_{j=1}^{\lb-|S|} |\overline{N}_{S'}(v'_j)| &
	\textstyle \geq
	\sum_{j=1}^{\lb - |S|} |\overline{N}_S(v_j)| \\
	& > k - |\overline{E}(S)|
	- |\overline{N}_S(v_i)| 
	 = k - |\overline{E}(S')|
\end{align*}
where the first inequality follows from the fact that
$|\overline{N}_{S'}(v)| \geq~|\overline{N}_S(v)|$ holds for every $v \in
V(g)\setminus S'$, and the second inequality follows from the statement
of the reduction rule {\bf RR3}.
Consequently, the maximum $k$-defective clique in the instance $(g,S')$
is of size strictly less than $|S'| + (\lb - |S|) = \lb+1$. That is,
every $k$-defective
clique that is in the instance $(g,S)$ and contains $v_i$ is of size at most
$\lb$, and thus we can remove $v_i$ from $g$.
\end{proof}

Then, we propose the reduction rule {\bf RR4}.
\begin{description}
	\item[RR4 \highlight{(second-order reduction rule)}.] Given an instance $(g,S)$ with $S\neq \emptyset$, for
		any vertex $u \in S$ and $v \in V(g)\setminus S$, let $S' =
		S\cup v$, $cn(u,v)$ be the number of common neighbors of $u$ and
		$v$ in $\notS = V(g)\setminus~S'$, $cnon(u,v)$ be the number of
		common non-neighbors of $u$ and $v$ in $\notS$, $xn(u,v)$ 
		be the number of vertices that are exclusive neighbors of either
		$u$ or $v$ in $\notS$;
		specifically,
		$xn(u,v) = |N_{\notS}(u) \setminus N_{\notS}(v)| +
		|N_{\notS}(v) \setminus N_{\notS}(u)|$. \newline
		If $|S'| + cn(u,v) + \min\big(k - |\overline{E}(S')|,
		xn(u,v)\big) + \min\big(cnon(u,v)$,
			$\max (0,  \lfloor \frac{k - |\overline{E}(S')|
		- xn(u,v)}{2} \rfloor)\big) \leq \lb$, then we can
		remove $v$ from $g$.
\end{description}

\begin{proof}[Proof of RR4]
	Let's consider the instance $(g,S')$. It is easy to verify that
	$cn(u,v)$, $cnon(u,v)$ and $xn(u,v)$ represent disjoint subsets of
	vertices of $V(g)\setminus S'$ and $cn(u,v)+cnon(u,v)+xn(u,v) =
	|V(g)\setminus S'|$.
	We consider two cases depending on
	whether $k-|\overline{E}(S')| > xn(u,v)$. Firstly, if
	$k-|\overline{E}(S')| \leq xn(u,v)$, then the maximum $k$-defective
	clique in the instance $(g,S')$ will be of size at most
	$|S'|+cn(u,v)+(k-|\overline{E}(S')|)$, since (1)~adding any exclusive
	neighbor of $u$ or $v$ to $S'$ will introduce at least one
	non-edge and (2)~adding any common non-neighbor of $u$ and $v$ to
	$S'$ will introduce at least two non-edges. Similarly, if
	$k-|\overline{E}(S')| > xn(u,v)$, then the maximum $k$-defective
	clique in the instance $(g,S')$ will be of size at most
	$|S'|+cn(u,v)+xn(u,v)+\min(cnon(u,v), \lfloor
	\frac{k-|\overline{E}(S')|-xn(u,v)}{2}\rfloor)$. In summary, the
	maximum $k$-defective clique in the instance $(g,S')$ is of size at
	most $|S'| + cn(u,v) + \min\big(k - |\overline{E}(S')|,
		xn(u,v)\big) + \min\big(cnon(u,v)$, \\
			$\max (0,  \lfloor \frac{k - |\overline{E}(S')|
		- xn(u,v)}{2} \rfloor)\big)$ and {\bf RR4} is correct.
\end{proof}

From the proofs of {\bf RR3} and {\bf RR4}, it can be observed that the
reduction rules are actually designed based on upper bounds; for
example, {\bf RR3} is based on {\bf UB3}.
The general idea is that, given an instance $(g,S)$ and a vertex $v \in
V(g)\setminus S$, if an upper bound of the instance $(g,S\cup v)$ is at
most $\lb$, then we can remove $v$ from $g$. 
It is easy to see that an alternative strategy is to directly
generate the instance $(g,S\cup v)$ which will then be pruned by the
upper bounds; this will have the same pruning effects as the reduction
rules.
The advantage of using reduction rules to remove $v$ from $g$
is that the reduction rules can be applied more efficiently by
computation sharing; in particular, applying the reduction rules for all
vertices of $V(g)\setminus S$ can be conducted in linear time in total
(see \csec\ref{sec:time_complexity}), while generating all the
sub-instances and then pruning by the upper bounds would take quadratic
time.
On the other hand, it is also worth mentioning that an upper bound could
be designed based on {\bf RR4}; we do not use it in this paper since
computing this upper bound is time-consuming.

In addition, we also utilize the following two reduction rules from the
literature.
\begin{description}
	\item[RR5.] Given an instance $(g,S)$, for any vertex $v \in
		V(g)\setminus S$ whose degree is less than $\lb-k$, we
		can remove $v$ from $g$~\cite{COR21:Chen}.
	\item[RR6.] Given a graph $G$, for any edge $(u,v) \in E(G)$ whose
		number of common neighbors in $G$ is less than $\lb-k-1$, we can
		remove the edge $(u,v)$ from $G$~\cite{AAAI22:Gao}.
\end{description}

\subsubsection{Time Complexity Analysis}
\label{sec:time_complexity}

Now, we analyze the time complexity of all the upper bounds and
reduction rules. Firstly, for the upper bounds {\bf UB1}--{\bf UB3}, it
is easy to see that {\bf UB2} and {\bf UB3} can be computed in time
linear to the number of
edges (\ie, $\bigo(m)$); note that, sorting vertices in {\bf UB3} can be
conducted in linear time by counting sort~\cite{Book:CSRL}.
For {\bf UB1}, we use the widely adopted greedy approach to assign
colors to vertices~\cite{KDD19:Chang,WALCOM10:Tomita}; that is, colors
are assigned to vertices in the reverse order of the degeneracy ordering
(see Definition~\ref{definition:degeneracy_ordering} for the definition
of degeneracy ordering), and a vertex
is assigned the smallest color that has not been taken by its neighbors.
Consequently, {\bf UB1} can be computed in $\bigo(m)$ time.

Secondly, for the reduction rules {\bf RR1}--{\bf RR5}, it is easy to
see that {\bf RR1},
{\bf RR2} and {\bf RR3} can be exhaustively applied until convergence
(\ie, until the instance can no longer be reduced by these reduction
rules) in linear time.
For {\bf RR4}, we do not apply it exhaustively for the
sake of efficiency. Instead, we let $u$ be the vertex most recently
added to $S$, and loop through each vertex $v \in V(g)\setminus S$ only
once. Observing that $cn(u,v) = |N_{\notS}(u) \cap N(v)|$, $cnon(u,v) =
|\overline{N}_{\notS}(u)| - |\overline{N}_{\notS}(u)\cap N(v)|$, and
$xn(u,v) = |V(g)\setminus S'| - cn(u,v) - cnon(u,v)$, applying {\bf RR4}
for $u$ and $v$ can be conducted in $\bigo(d_g(v))$ time by marking
$N_{\notS}(u)$ and $\overline{N}_{\notS}(u)$ in a preprocessing step.
Consequently, applying ${\bf RR4}$ once for all vertices 
$v \in V(g)\setminus S$ takes $\bigo(m)$ time in total.
For the reduction rule {\bf RR5}, it actually reduces the graph $g$ to
its $(\lb-k)$-core (see Definition~\ref{definition:k_core}), \ie, a
vertex is removed from $g$ if its
degree in $g$ is smaller than $\lb-k$; this can be conducted in $\bigo(m)$
time~\cite{Book18:Chang,JACM83:Matula}. Note that, if a vertex of $S$ is
removed during the process, then the instance $(g,S)$ is pruned (based
on {\bf UB2}).

Thirdly, for the reduction rule {\bf RR6}, we only apply it in the
preprocessing (\ie, Line~2 of \calg\ref{alg:kdc}) as it has a higher time
complexity than other reduction rules. Specifically, exhaustively
applying {\bf RR6} actually reduces
the input graph $G$ to its $(\lb-k+1)$-truss (see
Definition~\ref{definition:k_truss}), \ie, an edge $(u,v)$ is removed
from $G$ if the number of common neighbors of its two end-points in $G$
is smaller than $\lb-k-1$; this can be conducted in $\bigo(\delta(G)\times m)$
time~\cite{PVLDB12:Wang}, where $\delta(G)$ is the
degeneracy of $G$ and is at most $\sqrt{m}$.

In summary, we apply reduction rules {\bf RR1}--{\bf RR5} and upper
bounds {\bf UB1}--{\bf UB3} in \bbsearch;
thus, $P_1$ in \cthm\ref{theorem:time_complexity} is
$\bigo(m)$. For Line~2 of \calg\ref{alg:kdc}, we first exhaustively
apply {\bf RR5}, and then {\bf RR6}.
Thus, Line~2 of \calg\ref{alg:kdc} takes $\bigo(\delta(G)\times m)$
time.

\subsection{Compute a Large Initial Solution}
\label{sec:initial}

In this subsection, we discuss how to efficiently compute a large
initial $k$-defective clique at Line~1 of \calg\ref{alg:kdc}.
Firstly, we can heuristically compute a $k$-defective clique in
$\bigo(m)$ time based on the degeneracy ordering, \ie, the longest
suffix of the degeneracy ordering that is a $k$-defective clique.
The pseudocode is shown in \calg\ref{alg:degen}, denoted \kw{Degen}.

\begin{algorithm}[htb]
\small
\caption{$\kw{Degen}(G,k)$}
\label{alg:degen}
\KwIn{A graph $G$ and an integer $k$}
\KwOut{A large $k$-defective clique in $G$}

\vspace{1pt}
\State{Compute a degeneracy ordering for the vertices of $G$}
\State{$C \la$ the longest suffix of the degeneracy ordering that is a
$k$-defective clique}
\Return{$C$}
\end{algorithm}

As discussed at the end of \csec\ref{sec:time_complexity}, Line~2 of
\calg\ref{alg:kdc} takes $\bigo(\delta(G)\times m)$ time; this is
higher than the time complexity of \kw{Degen}. Thus, it makes
sense to spend a little more time at Line~1 of \calg\ref{alg:kdc} aiming
to compute a larger initial solution. Motivated by this, besides
heuristically computing a degeneracy ordering-based solution in the
input graph $G$, we also extract $n$ subgraphs from $G$ --- one
subgraph for each vertex of $G$ --- and heuristically compute a
degeneracy ordering-based solution in each of the subgraphs; the largest
one among these $n+1$ solutions is then kept as the initial solution.
To bound the time complexity by $\bigo(\delta(G)\times m)$, we extract
the subgraphs based on a degeneracy ordering of $G$. Specifically, let
$(v_1,v_2,\ldots,v_n)$ be a degeneracy ordering of $G$, the subgraph
extracted for $v_i$ then is the subgraph of $G$ induced by the set of higher
ranked neighbors of $u$ regarding the degeneracy ordering, \ie,
$N(v_i)\cap \{v_{i+1},\ldots,v_n\}$. The pseudocode is shown in
\calg\ref{alg:degen_opt}, denoted \kw{Degen\text{-}opt}. As the subgraph
extracted for $v_i$ will have at most $\min(d(v_i), \delta(G))$
vertices, the time complexity of \calg\ref{alg:degen_opt} is bounded by
\[\textstyle
	\sum_{i=1}^n \min(d(v_i), \delta(G))^2 \leq \sum_{i=1}^n d(v_i)
	\times \delta(G) = 2\times \delta(G)\times m
\]
Consequently, by invoking \calg\ref{alg:degen_opt} at Line~1 of
\calg\ref{alg:kdc}, $P_2$ in \csec\ref{sec:practical_kdc} is
$\bigo(\delta(G)\times m)$, and the time complexity of \kdc is
$\bigo(\delta(G)\times m + m\times \gamma_k^n)$.

\begin{algorithm}[htb]
\small
\caption{$\kw{Degen\text{-}opt}(G,k)$}
\label{alg:degen_opt}
\KwIn{A graph $G$ and an integer $k$}
\KwOut{A large $k$-defective clique in $G$}

\vspace{1pt}
\State{$C \la \kw{Degen}(G,k)$}
\State{Compute a degeneracy ordering for the vertices of $G$}
\ForEach{vertex $u \in V(G)$}{
	\State{$N^+(u) \la$ the set of higher ranked neighbors of $u$ in
	$G$, according to the degeneracy ordering}
	\State{$g \la $ the subgraph of $G$ induced by $N^+(u)$}
	\State{$C' \la \kw{Degen}(g, k)$}
	 \lIf{$|C'\cup u| > |C|$}{$C \la C'\cup u$}
}
\Return{$C$}
\end{algorithm}

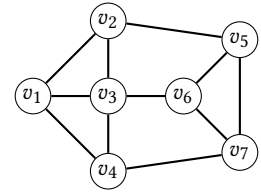
\begin{figure}[ht]
\centering
\begin{tikzpicture}[scale=0.5]
\node[draw, circle, inner sep=1.5pt] (1) at (0,0) {$v_1$};
\node[draw, circle, inner sep=1.5pt] (2) at (2,2) {$v_2$};
\node[draw, circle, inner sep=1.5pt] (3) at (2,0) {$v_3$};
\node[draw, circle, inner sep=1.5pt] (4) at (2,-2) {$v_4$};
\node[draw, circle, inner sep=1.5pt] (5) at (5.5,1.5) {$v_5$};
\node[draw, circle, inner sep=1.5pt] (6) at (4,0) {$v_6$};
\node[draw, circle, inner sep=1.5pt] (7) at (5.5,-1.5) {$v_7$};
\path[draw,thick] (1) -- (2) -- (3) -- (4) -- (1) -- (3) -- (6) -- (7)
-- (4);
\path[draw,thick] (2) -- (5) -- (7);
\path[draw,thick] (5) -- (6);
\end{tikzpicture}
\caption{Running example for computing an initial solution}
\label{fig:init}
\end{figure}

\begin{example}
Consider the graph in \cfig\ref{fig:init}, a degeneracy ordering is
$(v_1,v_2,v_5,v_3,v_4,v_6,v_7)$. Suppose $k=1$, the longest suffix that
is a $k$-defective clique is $\{v_4,v_6,v_7\}$ of size $3$; thus,
\kw{Degen} finds an initial solution of size $3$. Now, let's consider
$v_1$, its set of higher ranked neighbors is $N^+(v_1) =
\{v_2,v_3,v_4\}$. It is easy to see that the subgraph induced by
$N^+(v_1)$ is a $k$-defective clique. Thus, \kw{Degen\text{-}opt}
reports the initial solution $\{v_1,v_2,v_3,v_4\}$ of size $4$.
\end{example}

\section{Experiments}
\label{sec:experiment}

We have shown in \csec\ref{sec:framework} that our algorithm \kdc
achieves a better time complexity than the existing algorithms for the
problem of maximum $k$-defective clique computation.
In this section, we show empirically that \kdc also performs better than
the existing algorithms in practice. Specifically, we evaluate \kdc
against the following existing algorithms.
\begin{itemize}
	\item \kdbb: the existing algorithm with the state-of-the-art
		practical performance proposed in~\cite{AAAI22:Gao}.
	\item \madecp: the existing algorithm with the state-of-the-art time
		complexity proposed in~\cite{COR21:Chen}.
\end{itemize}
In addition, we also evaluate the following variants of our algorithm
\kdc to test the effectiveness of the different components of \kdc.
\begin{itemize}
	\item \kdcwob: \kdc without the upper bound {\bf UB1}.
	\item \kdcwor: \kdc without the reduction rules {\bf RR3} and {\bf
		RR4}.
	\item \kdcwoi: \kdc with the initial solution (\ie, Line~1 of
		\calg\ref{alg:kdc}) computed by \kw{Degen} and without applying the
		reduction rule {\bf RR6} at Line~2 of \calg\ref{alg:kdc}.
\end{itemize}
All our algorithms are implemented in C++ and compiled with -O3
optimization.~\footnote{The source code of \kdc is released at
\url{https://lijunchang.github.io/Maximum-kDC/}}
All experiments are run in the single-thread mode on a machine with an
Intel Core i7-8700 CPU and 64GB main memory and running Ubuntu 18.04.

\stitle{Datasets.} Same as~\cite{AAAI22:Gao}, we run
the algorithms on the following \highlight{three} graph collections.
\begin{itemize}
	\item The {\bf real-world graphs}
		collection~\footnote{\url{http://lcs.ios.ac.cn/~caisw/Resource/realworld\%20graphs.tar.gz}}
		contains $139$ real-world graphs from the Network Data Repository
		with up to $5.87\times 10^7$ vertices and $1.06\times 10^8$
		undirected edges.
	\item The {\bf Facebook graphs} collection~\footnote{\url{https://networkrepository.com/socfb.php}}
		contains $114$ Facebook social networks from the Network Data
		Repository with up to $5.92\times 10^7$ vertices and $9.25\times
		10^7$ undirected edges.
	\item \highlight{The {\bf DIMACS10\&SNAP graphs} collection contains
			37 graphs with
			up to $1.04\times 10^6$ vertices and $6.89\times 10^6$
			undirected edges. Among the $37$ graphs, 
			$27$ are from 
			DIMACS10~\footnote{\url{https://www.cc.gatech.edu/dimacs10/downloads.shtml}}
			and $10$ are from
			SNAP~\footnote{\url{http://snap.stanford.edu/data/}}.
			}
\end{itemize}
\highlight{Note that, the graphs included in these three collections are
the same ones tested in~\cite{AAAI22:Gao}.}

\stitle{Metric.} 
We record the total {\em processing time} 
of running an algorithm on a graph instance for a specific $k$.
The recorded processing time is the total CPU
time excluding the I/O time of loading the graph instance from disk
to main memory.
%
%
\highlight{Same as~\cite{AAAI22:Gao}, we choose $k$ from
	$\{1,3,5,10,15,20\}$ and set a time limit of $3$ hours for each
testing.}

\begin{table}[t]
\footnotesize
\setlength{\tabcolsep}{2pt}
\centering
\caption{Number of solved instances by the algorithms \kdc, \kdbb and
	\madecp with a time limit of $3$ hours (best
performers are highlighted in bold)}
\label{table:against_existing}
\begin{tabular}{c|ccc|ccc|ccc}
\hline
& \multicolumn{3}{c|}{Real-world graphs} & \multicolumn{3}{c|}{Facebook
graphs} & \multicolumn{3}{c}{\highlight{DIMACS10\&SNAP}} \\ 
& \kdc & \kdbb & \madecp & \kdc & \kdbb & \madecp & \kdc & \kdbb &
\madecp \\ \hline
$k = 1$ & {\bf 133} & 117 & 115 & {\bf 114} & 110 & 110 & {\bf 37} & 36 & 36 \\
$k=3$ & {\bf 130} & 107 & 94 & {\bf 114} & 110 & 104 & {\bf 37} & 35 & 31 \\
$k=5$ & {\bf 127} & 104 & 81 & {\bf 114} & 108 & 78 & {\bf 37} & 34 & 28 \\
$k=10$ & {\bf 119} & 85 & 36 & {\bf 111} & 109 & 9 & {\bf 36} & 30 & 15 \\
$k=15$ & {\bf 110} & 68 & 26 & 101 & {\bf 103} & 0 & {\bf 29} & 25 & 10 \\
$k=20$ & {\bf 104} & 56 & 20 & {\bf 88} & 80 & 0 & {\bf 27} & 22 & 6 \\ \hline
\end{tabular}
\end{table}

\subsection{Against the Existing Algorithms}

In this subsection, we evaluate our algorithm \kdc against the 
existing algorithms \kdbb and \madecp, regarding the efficiency. The
results on the number of
solved instances with a time limit of 3 hours are shown in
\ctab\ref{table:against_existing}, which are also partially illustrated
in Figures~\ref{fig:realworld} and \ref{fig:facebook}.
\madecp is an improved version, by the authors of
\kdbb~\cite{AAAI22:Gao}, of the \madec algorithm proposed
in~\cite{COR21:Chen}.
As {\em the authors of \cite{AAAI22:Gao} are not able to provide the
code of algorithms \kdbb and \madecp}, their numbers reported
in~\ctab\ref{table:against_existing} are obtained from the original
paper~\cite{AAAI22:Gao}; \highlight{note that~\cite{AAAI22:Gao} tests
exactly the same sets of graphs for the three graph collections and also has
the time limit of $3$ hours}.
From \ctab\ref{table:against_existing}, we can see that \kdbb
significantly outperforms \madecp (especially for $k \geq 5$), and our
algorithm \kdc further outperforms \kdbb with the only exception of
$k=15$ on the Facebook graphs collection.
We also would like to highlight two other observations that can be
observed from Figures~\ref{fig:realworld} and \ref{fig:facebook}.
Firstly, on the real-world graphs collection, \kdc with a time limit of
$3$ {\em seconds} solves even more instances than \kdbb with a time limit
of $3$ {\em hours}.
Secondly, on the Facebook graphs collection, \kdc solves all $114$
instances with time limits of $125$, $393$ and $1353$ seconds,
respectively, for $k=1$, $3$ and $5$. This demonstrates the practical
superiority of \kdc over the existing algorithms.

{
\addtolength{\tabcolsep}{-2pt}
\begin{table*}[ht]
\centering
\footnotesize
\caption{\highlight{Processing time (in seconds) of \kdc, \kdcwor,
		\kdcwob, \kdcwoi and \kdbb on the $41$ Facebook
		graphs with more than $15,000$ vertices; the results of \kdcwor,
		\kdcwob and \kdcwoi for $k=1$ and $k=5$ are omitted due to
		space limitations. Best performers are highlighted
in bold; if a running time is slow than the fastest running time by less
than 10\%, we also consider it to be best. $n$ is the number of vertices
and $m$ is the number of edges in the graph.}}
\label{table:facebook_time}
\begin{tabular}{crr|cc|ccccc|cc|ccccc}
\hline
 & & & \multicolumn{2}{c|}{$k=1$} & \multicolumn{5}{c|}{$k=3$} & \multicolumn{2}{c|}{$k=5$} & \multicolumn{5}{c}{$k=10$} \\ 
 & $n$ & $m$ & \kdc & \kdbb & \kdc & \kdcwor & \kdcwob & \kdcwoi & \kdbb & \kdc & \kdbb & \kdc & \kdcwor & \kdcwob & \kdcwoi & \kdbb \\ \hline
A-anon & 3M & 23M & {\bf 5.0} & - & {\bf 5.2} & {\bf 5.1} & 29 & 6803 & - & {\bf 5.7} & - & {\bf 73} & 412 & 6540 & - & -\\
Auburn71 & 18K & 973K & {\bf 1.4} & 432 & 1.9 & {\bf 1.6} & 384 & 55 & 536 & {\bf 8.6} & 639 & {\bf 956} & {\bf 905} & - & - & 1195\\
B-anon & 2M & 20M & {\bf 7.9} & - & {\bf 8.4} & {\bf 8.1} & 57 & - & - & {\bf 9.2} & - & {\bf 44} & 56 & 7858 & - & -\\
Berkeley13 & 22K & 852K & {\bf 0.18} & 425 & {\bf 0.18} & {\bf 0.18} & 0.42 & 6.5 & 452 & {\bf 0.19} & 506 & {\bf 0.34} & 0.39 & 61 & 55 & 630\\
BU10 & 19K & 637K & {\bf 0.09} & 252 & {\bf 0.15} & {\bf 0.15} & 1.1 & 6.5 & 290 & {\bf 0.39} & 332 & {\bf 4.0} & 5.1 & 16 & 35 & 370\\
Cornell5 & 18K & 790K & {\bf 1.1} & 393 & {\bf 2.1} & {\bf 2.0} & 194 & 249 & 922 & {\bf 2.6} & 1265 & {\bf 17} & {\bf 16} & 8670 & - & 2636\\
FSU53 & 27K & 1M & {\bf 0.35} & 209 & 0.40 & {\bf 0.23} & - & 80 & 610 & {\bf 2.8} & 828 & 248 & {\bf 221} & - & 8351 & 1400\\
Harvard1 & 15K & 824K & {\bf 0.76} & 347 & {\bf 0.82} & {\bf 0.85} & 91 & 284 & 421 & {\bf 0.94} & 517 & {\bf 11} & {\bf 10} & 3680 & - & 1354\\
Indiana & 29K & 1M & {\bf 0.46} & 1142 & {\bf 0.46} & {\bf 0.48} & 21 & 95 & 1138 & {\bf 0.53} & 1261 & {\bf 19} & {\bf 20} & 1975 & 3710 & 1421\\
Indiana69 & 29K & 1M & {\bf 0.46} & 1134 & {\bf 0.46} & {\bf 0.48} & 21 & 97 & 1072 & {\bf 0.54} & 1186 & {\bf 19} & {\bf 20} & 1964 & 3706 & 1321\\
konect & 59M & 92M & {\bf 8.1} & - & {\bf 8.2} & {\bf 7.9} & 9.7 & 178 & - & {\bf 9.4} & - & - & - & - & - & -\\
Maryland58 & 20K & 744K & {\bf 0.10} & 150 & {\bf 0.12} & {\bf 0.11} & 0.18 & 4.1 & 162 & {\bf 0.12} & 185 & {\bf 0.60} & 0.82 & 2.7 & 8.2 & 239\\
Michigan23 & 30K & 1M & {\bf 0.63} & 833 & {\bf 0.66} & {\bf 0.63} & 0.88 & 1556 & 1072 & {\bf 0.67} & 971 & {\bf 2.2} & 2.9 & 215 & - & 1384\\
MSU24 & 32K & 1M & {\bf 0.35} & 493 & {\bf 0.35} & {\bf 0.33} & 0.40 & 92 & 576 & {\bf 0.34} & 666 & {\bf 0.47} & {\bf 0.50} & 1.5 & 10227 & 879\\
MU78 & 15K & 649K & {\bf 0.10} & 182 & {\bf 0.13} & {\bf 0.12} & 0.53 & 1.1 & 200 & {\bf 0.32} & 215 & {\bf 67} & {\bf 68} & 393 & 203 & 306\\
NYU9 & 21K & 715K & {\bf 0.09} & 349 & {\bf 0.09} & {\bf 0.09} & 0.13 & 17 & 399 & {\bf 0.09} & 396 & {\bf 0.12} & 0.13 & 0.17 & 26 & 466\\
Oklahoma97 & 17K & 892K & {\bf 0.78} & 383 & {\bf 0.96} & {\bf 0.89} & - & 1162 & 2048 & {\bf 5.1} & 3938 & 379 & {\bf 334} & - & 10533 & 6926\\
OR & 63K & 816K & {\bf 0.16} & 356 & {\bf 0.30} & 0.34 & 21 & 9.9 & 456 & {\bf 1.0} & 587 & {\bf 55} & {\bf 55} & 885 & 258 & 1486\\
Penn94 & 41K & 1M & {\bf 0.23} & 1139 & {\bf 0.23} & {\bf 0.22} & 0.25 & 8.2 & 1557 & {\bf 0.23} & 1820 & {\bf 0.29} & {\bf 0.32} & 0.35 & 20 & 1972\\
Rutgers89 & 24K & 784K & {\bf 0.08} & 219 & 0.07 & 0.07 & 0.09 & {\bf 0.04} & 276 & {\bf 0.08} & 279 & {\bf 0.20} & {\bf 0.22} & 1.4 & 7.1 & 386\\
Tennessee95 & 16K & 770K & {\bf 0.54} & 246 & {\bf 0.56} & {\bf 0.53} & 2.7 & 16 & 361 & {\bf 0.52} & 424 & {\bf 1.8} & {\bf 1.7} & 62 & 884 & 554\\
Texas80 & 31K & 1M & {\bf 0.56} & 342 & {\bf 0.66} & {\bf 0.62} & 4.6 & 52 & 423 & {\bf 0.73} & 534 & 80 & {\bf 70} & 1136 & 2603 & 753\\
Texas84 & 36K & 1M & {\bf 6.2} & 1490 & 13 & {\bf 11} & 6503 & 5555 & 1674 & {\bf 69} & 2769 & 1321 & {\bf 1134} & - & - & 10253\\
UC33 & 16K & 522K & {\bf 0.07} & 156 & {\bf 0.07} & {\bf 0.06} & 1.6 & 1.3 & 171 & {\bf 0.07} & 181 & {\bf 0.14} & {\bf 0.15} & 148 & 3.2 & 263\\
uci-uni & 58M & 92M & {\bf 13} & - & {\bf 13} & {\bf 12} & 14 & 206 & - & {\bf 14} & - & - & - & - & - & -\\
UCLA & 20K & 747K & {\bf 0.09} & 190 & 0.09 & 0.09 & 0.11 & {\bf 0.04} & 206 & {\bf 0.09} & 237 & {\bf 0.14} & {\bf 0.15} & 0.17 & 4.9 & 290\\
UCLA26 & 20K & 747K & {\bf 0.09} & 184 & 0.09 & 0.09 & 0.12 & {\bf 0.04} & 207 & {\bf 0.09} & 215 & {\bf 0.14} & {\bf 0.15} & 0.19 & 4.9 & 288\\
UConn & 17K & 604K & {\bf 0.06} & 109 & 0.06 & 0.05 & 0.06 & {\bf 0.04} & 126 & {\bf 0.06} & 169 & {\bf 0.13} & 0.16 & 0.22 & 2.5 & 194\\
UConn91 & 17K & 604K & {\bf 0.06} & 105 & 0.06 & 0.05 & 0.07 & {\bf 0.04} & 123 & {\bf 0.06} & 173 & {\bf 0.13} & 0.16 & 0.22 & 2.5 & 208\\
UF & 35K & 1M & {\bf 0.58} & 793 & {\bf 0.58} & 0.84 & - & 282 & 1332 & {\bf 0.74} & 1602 & {\bf 27} & {\bf 29} & - & 8777 & 2579\\
UF21 & 35K & 1M & {\bf 0.57} & 787 & {\bf 0.58} & 0.84 & - & 281 & 1297 & {\bf 0.74} & 1542 & {\bf 27} & {\bf 29} & - & 8767 & 2571\\
UGA50 & 24K & 1M & {\bf 5.7} & 724 & 43 & {\bf 37} & - & 4895 & 1467 & {\bf 165} & 2459 & 3318 & {\bf 2856} & - & - & 6794\\
UIllinois & 30K & 1M & {\bf 0.68} & 486 & {\bf 0.69} & {\bf 0.65} & 2.9 & 93 & 644 & {\bf 0.65} & 806 & {\bf 3.6} & {\bf 3.5} & 342 & 8237 & 1245\\
UIllinois20 & 30K & 1M & {\bf 0.68} & 486 & {\bf 0.68} & {\bf 0.65} & 2.9 & 93 & 610 & {\bf 0.66} & 784 & {\bf 3.6} & {\bf 3.5} & 341 & 8195 & 1217\\
UMass92 & 16K & 519K & {\bf 0.15} & 226 & {\bf 0.15} & {\bf 0.15} & 0.19 & 25 & 245 & {\bf 0.17} & 265 & {\bf 0.30} & 0.37 & 0.58 & 82 & 318\\
UNC28 & 18K & 766K & {\bf 0.46} & 236 & {\bf 0.47} & {\bf 0.45} & 0.82 & 54 & 287 & {\bf 0.46} & 336 & {\bf 2.1} & {\bf 2.1} & 15 & 7278 & 380\\
USC35 & 17K & 801K & {\bf 0.31} & 232 & {\bf 0.31} & {\bf 0.30} & 0.60 & 390 & 267 & {\bf 0.31} & 334 & 0.52 & {\bf 0.47} & 7.7 & 6226 & 409\\
UVA16 & 17K & 789K & {\bf 0.43} & 341 & {\bf 0.45} & {\bf 0.49} & 2.4 & 130 & 387 & {\bf 0.57} & 400 & {\bf 14} & 19 & 310 & 8666 & 552\\
Virginia63 & 21K & 698K & {\bf 0.29} & 84 & {\bf 0.29} & {\bf 0.29} & 0.34 & 1.3 & 103 & {\bf 0.26} & 143 & {\bf 1.1} & {\bf 1.1} & 2.9 & 169 & 215\\
Wisconsin87 & 23K & 835K & {\bf 0.17} & 532 & {\bf 0.18} & {\bf 0.18} & 9.7 & 19 & 612 & {\bf 0.31} & 664 & {\bf 47} & {\bf 43} & 1323 & 292 & 924\\
wosn-friends & 63K & 817K & {\bf 0.16} & 375 & {\bf 0.30} & 0.34 & 21 & 10.0 & 438 & {\bf 1.0} & 533 & {\bf 54} & {\bf 55} & 895 & 259 & 1260\\
\hline
\end{tabular}

\end{table*}
\addtolength{\tabcolsep}{2pt}
}

\begin{figure*}[t]
\centering
\begin{minipage}{.48\textwidth}
\begin{center}
\includegraphics[scale=.4]{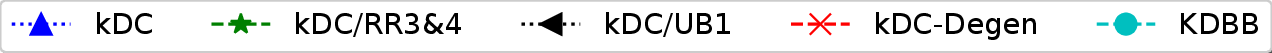}
\end{center}
\vspace*{-12pt}
\subfigure[$k=1$]{
	\includegraphics[width=.5\columnwidth]{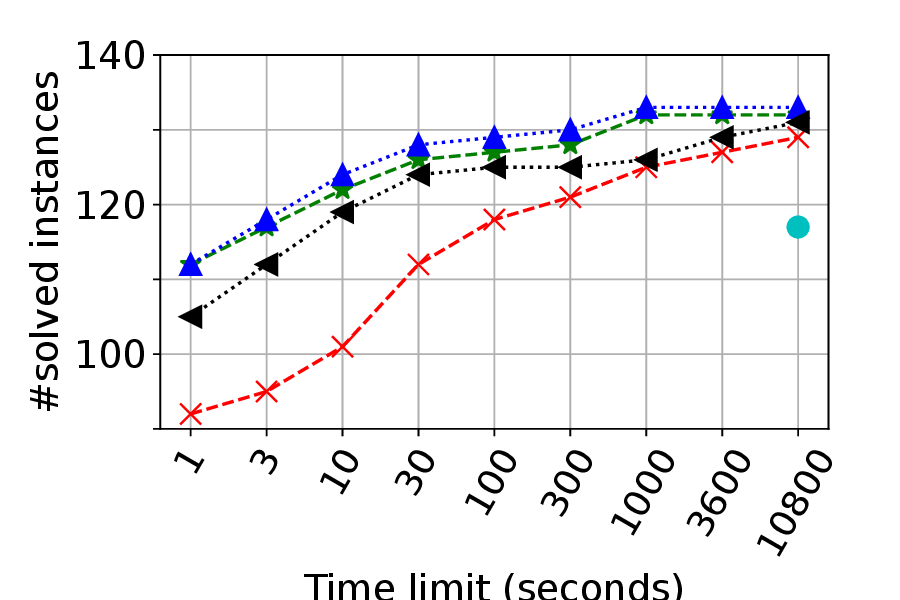}
}%
\subfigure[$k=3$]{
	\includegraphics[width=.5\columnwidth]{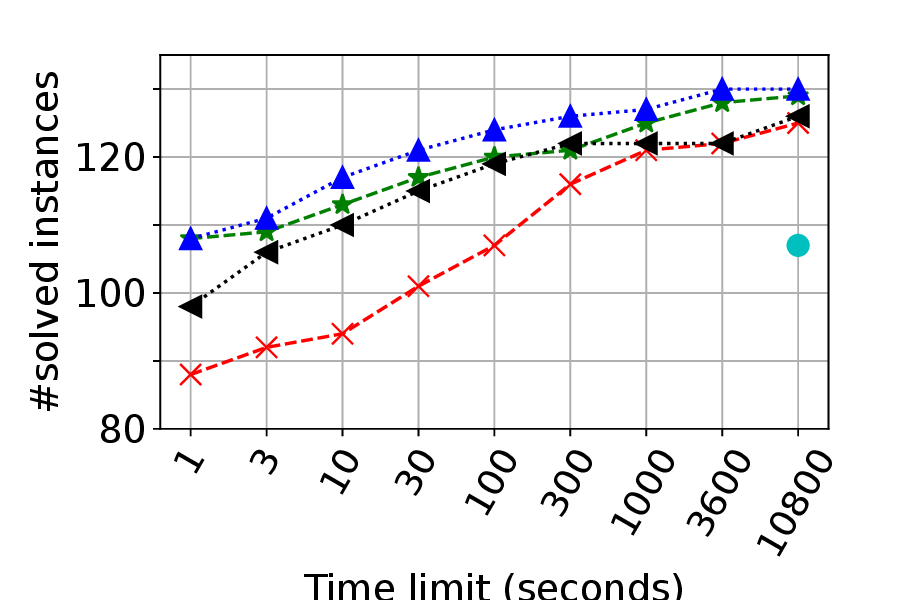}
}
\subfigure[$k=5$]{
	\includegraphics[width=.5\columnwidth]{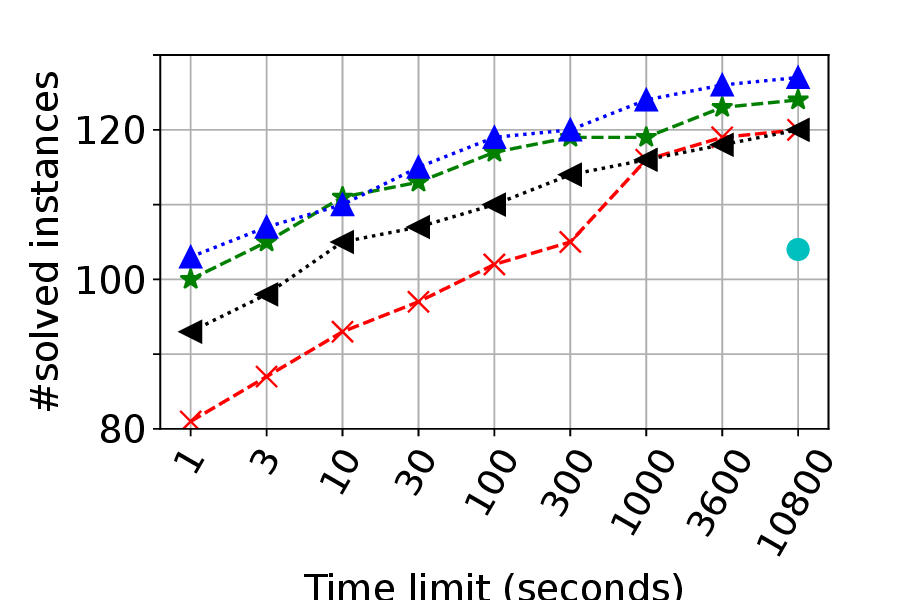}
}%
\subfigure[$k=10$]{
	\includegraphics[width=.5\columnwidth]{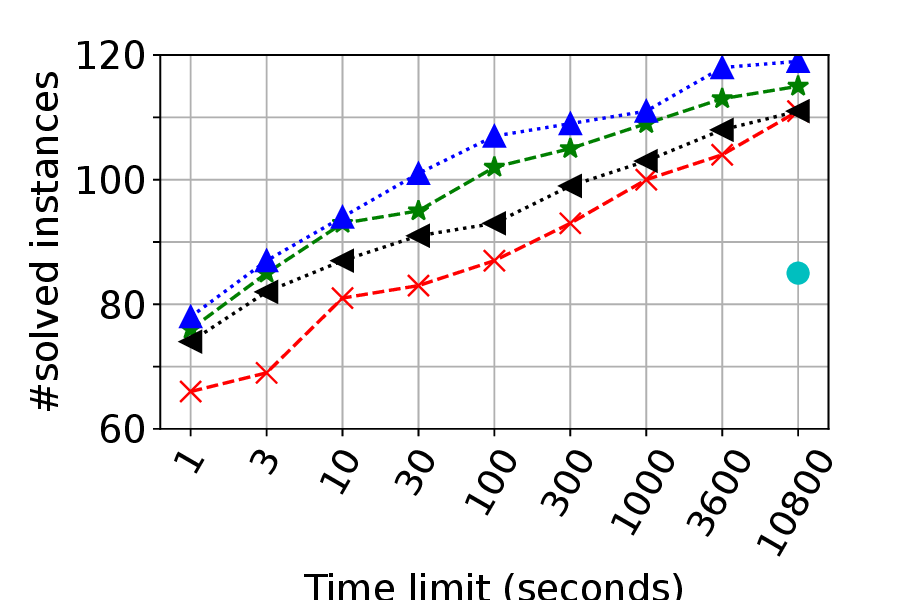}
}
\subfigure[$k=15$]{
	\includegraphics[width=.5\columnwidth]{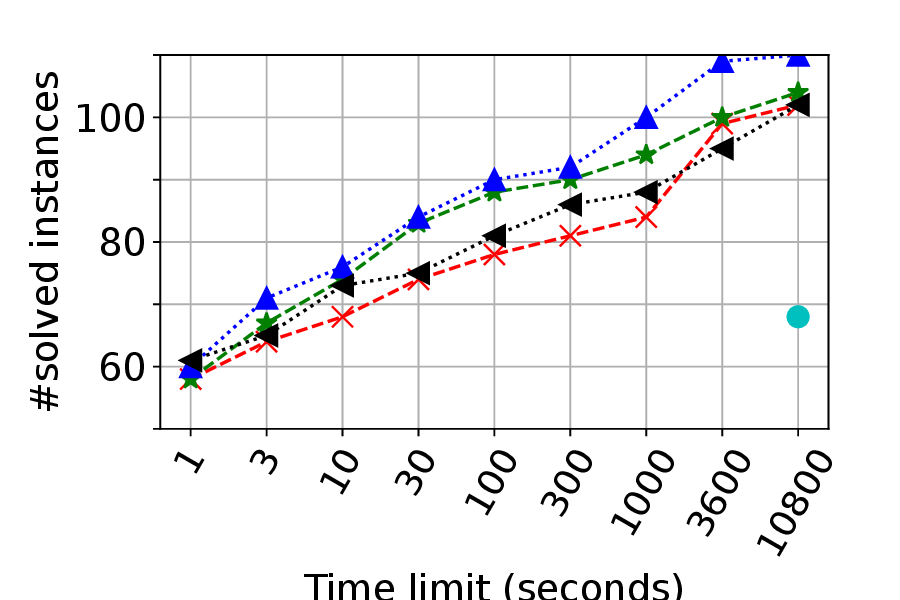}
}%
\subfigure[$k=20$]{
	\includegraphics[width=.5\columnwidth]{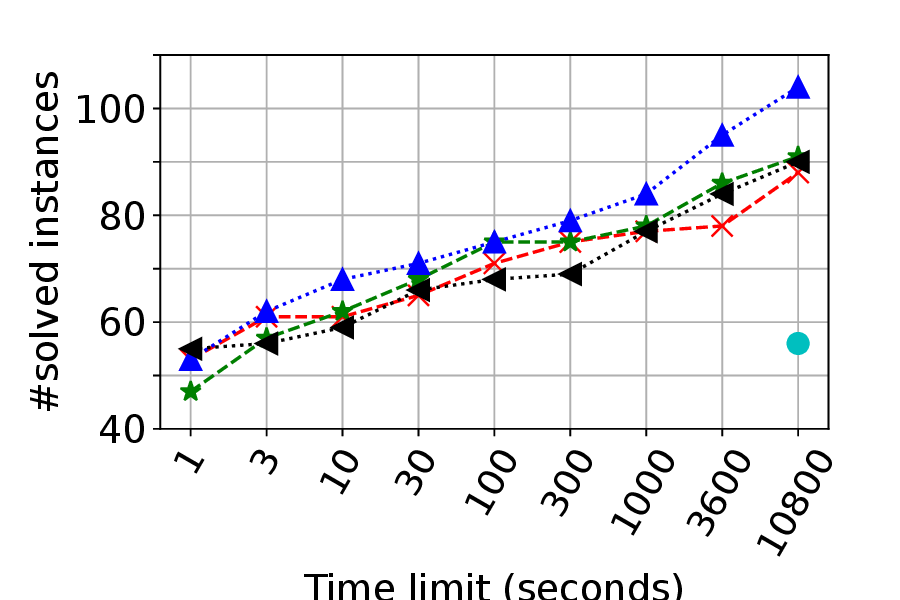}
}
\vspace{-2pt}
\caption{Number of solved instances for real-world graphs (vary time
limit, best viewed in color)}
\label{fig:realworld}
\end{minipage}
\hspace*{5pt}
\begin{minipage}{.48\textwidth}
\begin{center}
\includegraphics[scale=.4]{exp/title}
\end{center}
\vspace*{-12pt}
\subfigure[$k=1$]{
	\includegraphics[width=.5\columnwidth]{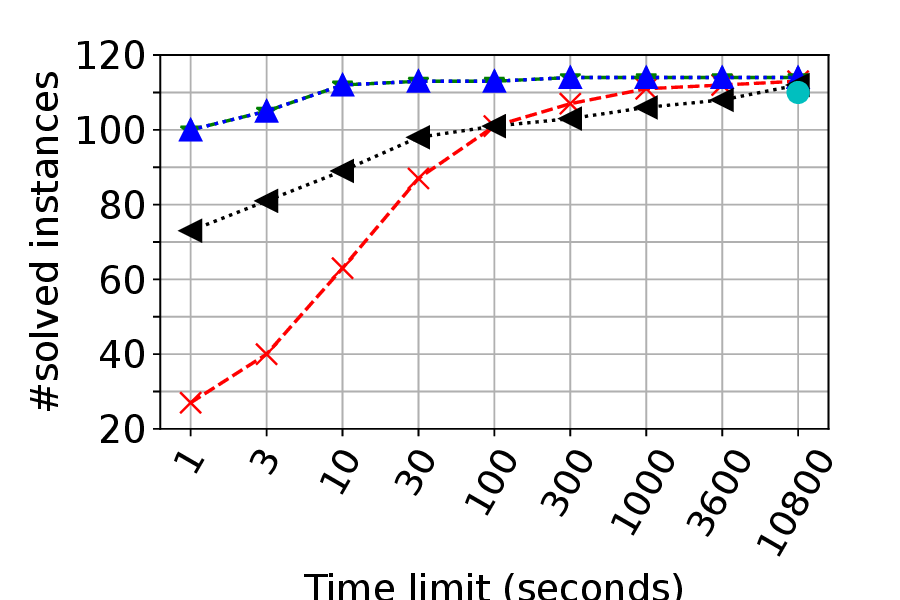}
}%
\subfigure[$k=3$]{
	\includegraphics[width=.5\columnwidth]{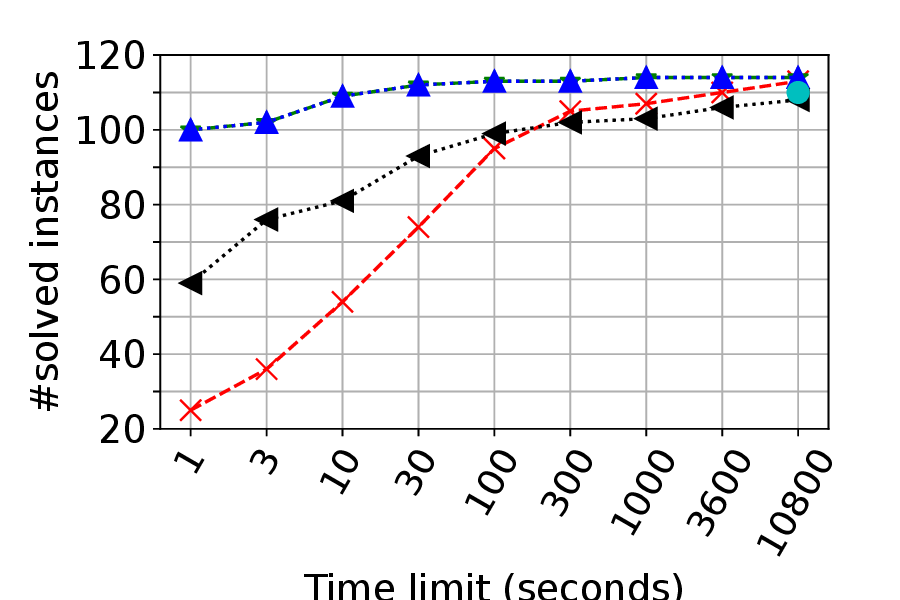}
}
\subfigure[$k=5$]{
	\includegraphics[width=.5\columnwidth]{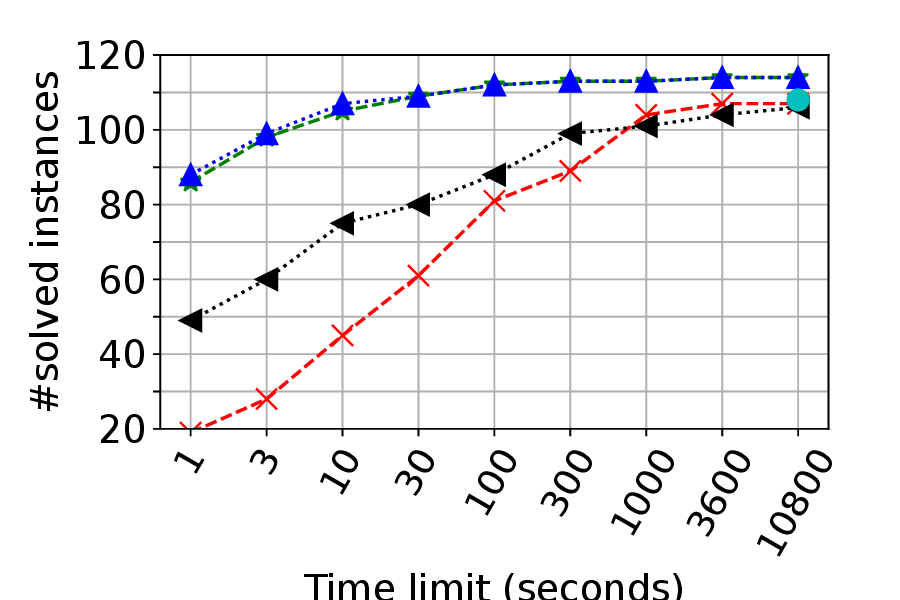}
}%
\subfigure[$k=10$]{
	\includegraphics[width=.5\columnwidth]{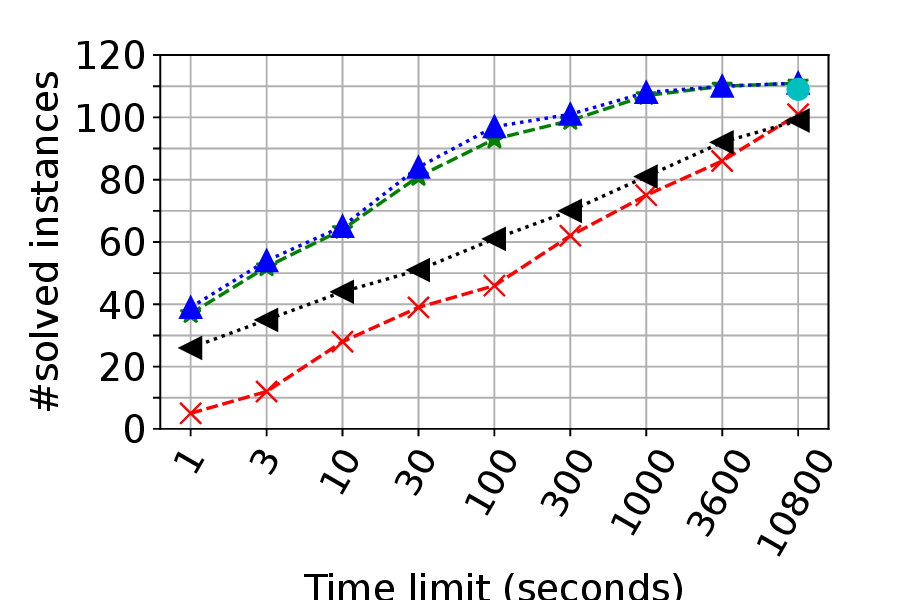}
}
\subfigure[$k=15$]{
	\includegraphics[width=.5\columnwidth]{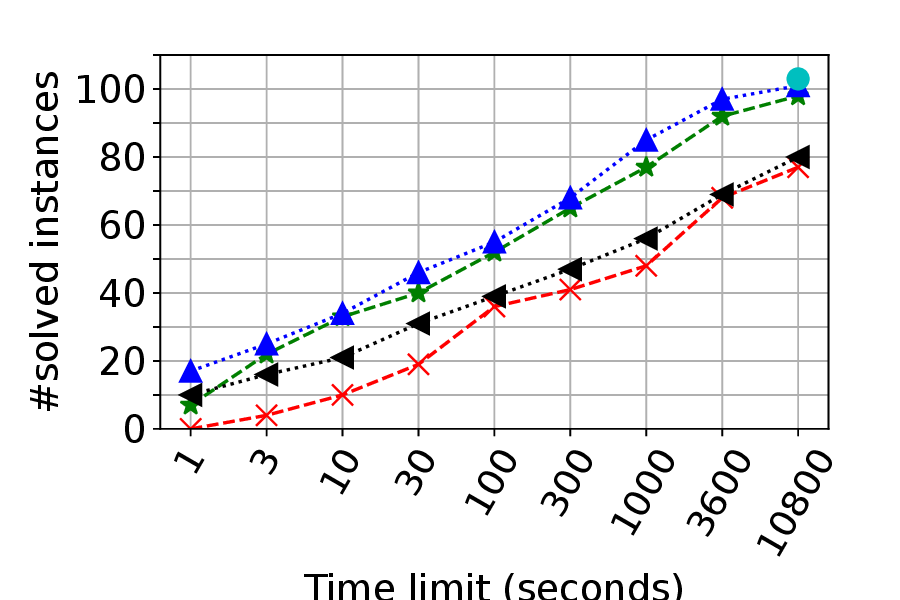}
}%
\subfigure[$k=20$]{
	\includegraphics[width=.5\columnwidth]{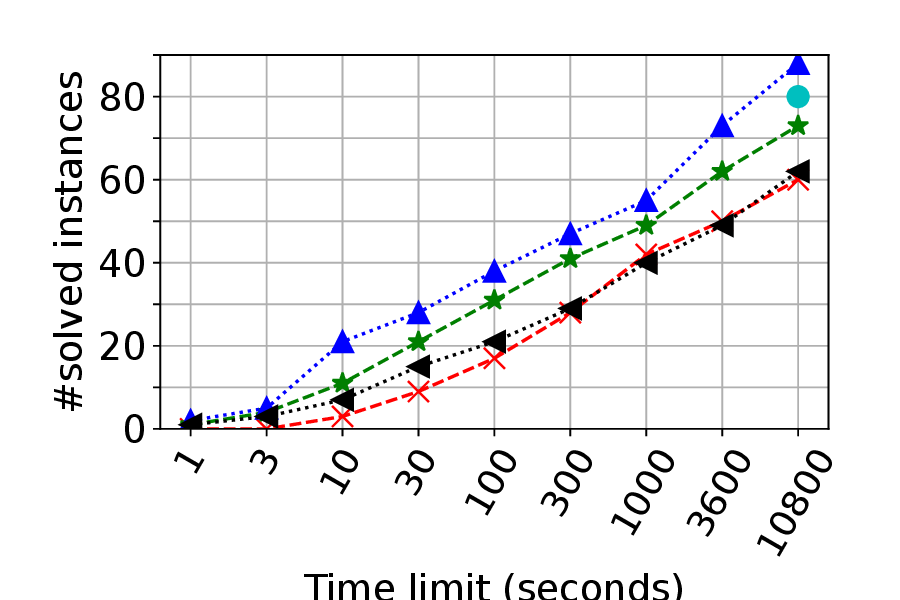}
}
\vspace{-2pt}
\caption{Number of solved instances for Facebook graphs (vary time
limit, best viewed in color)}
\label{fig:facebook}
\end{minipage}
\end{figure*}

To dive into a more detailed performance gain of \kdc over \kdbb, we report
the actual processing time of \kdc and \kdbb on the subset of Facebook
graphs that have more than $15,000$ vertices; there are $41$ such
graphs. We report the results for $k=1$, $3$, $5$, and $10$ in
\ctab\ref{table:facebook_time}, as \kdbb only gave results
for such $k$ values in~\cite{AAAI22:Gao};
\highlight{
for now, please ignore the columns regarding algorithms \kdcwor,
\kdcwob, and \kdcwoi.
The number of vertices and edges in these graphs are also illustrated in
\ctab\ref{table:facebook_time}.
Note that, an `$-$' for \kdbb indicates that the result is not available
(specifically, \cite{AAAI22:Gao} didn't report the results on the four
	graphs \texttt{A-anon}, \texttt{B-anon}, \texttt{konect},
\texttt{uci-uni}), while an `$-$' for our algorithms indicates that the
processing time is longer than the $3$-hour limit.
}
From \ctab\ref{table:facebook_time}, we can observe that \kdc
consistently and significantly runs faster than
\kdbb across all these testings.
In particular, \kdc on average is $1552$, $1754$, $1636$ and $820$ times
faster than \kdbb for $k=1$, $3$, $5$ and $10$, respectively.
This further demonstrates the superior performance of \kdc over
the existing fastest algorithm \kdbb.

\highlight{
We also would like to discuss the performance of our algorithm \kdc on
some large-scale graphs. Firstly, for the two graphs, konect and
uci-uni, in the Facebook graphs collection that have $58$M vertices and
$92$M undirected edges (here M means $\times 10^6$), \kdc is able to find the
maximum $k$-defective clique within the time limit for $k\leq 5$ and
times out for larger $k$ values; please refer to
\ctab\ref{table:facebook_time} for the results.
	Secondly, for the soc-orkut graph in
real-world graphs collection that have $3$M vertices and $106$M
undirected edges,
\kdc is able to find the maximum $k$-defective clique within the time
limit for $k\leq 3$ and times out for larger $k$ values. Thirdly, we
also tested \kdc on the webbase-2001 graph (downloaded from
\url{https://law.di.unimi.it/datasets.php}) that has $116$M vertices and
$855$M undirected edges. \kdc is able to find the maximum $k$-defective
clique for all the tested $k$ values within $30$ seconds. It will be our
future work to further improve the number of solved instances for the
different $k$ values.
}

\subsection{Ablation Studies}

Now, we conduct ablation studies for our proposed
techniques. Firstly, we compare \kdc with \kdcwor which is \kdc without
applying our two reduction rules {\bf RR3} and {\bf RR4} that are
described in \csec\ref{sec:reduction_rules}. The results for $k=1$, $3$,
$5$, $10$, $15$, and $20$ on the real-world
graphs collection are shown in \cfig\ref{fig:realworld}, and that on the
Facebook graphs collection are shown in \cfig\ref{fig:facebook};
specifically, we vary the time limit and report the number of graph
instances that are solved by an algorithm within a specific time limit.
We can see that \kdc consistently outperforms \kdcwor, and the
improvement is more evident when $k$ becomes large; for example,
for $k=20$ and with a time limit of $3$ hours, \kdc solves $13$ and
$15$ more instances than \kdbb on
the real-world graphs collection and the Facebook graphs collection,
respectively. This demonstrates that our new reduction rules {\bf RR3}
and {\bf RR4} are effective in improving the efficiency of maximum
$k$-defective computation.
\highlight{
However, we also observe that \kdc and \kdcwor perform similarly for $k
\leq 10$ on the Facebook graphs collection; specifically,
the processing time of \kdcwor on $41$ of the Facebook graphs for $k=3$
and $k=10$ are also listed in \ctab\ref{table:facebook_time}. One of
the reasons is that the upper bound {\bf UB1}, which is used in both
\kdc and \kdcwor, is very effective for these testings, and as a result
{\bf RR3} and {\bf RR4} do not prune many additional vertices. To verify
that, we also implement a version of \kdc without {\bf UB1}, {\bf RR3}
and {\bf RR4}, denoted \kdcwobr, and compare it with \kdcwob. Our
results show that \kdcwob solves $5$, $8$, and $14$ more instances than
\kdcwobr, respectively, for $k=3,5$ and $10$ on the Facebook graphs
collection with a time limit of $100$ seconds; this demonstrates that
{\bf RR3} and {\bf RR4} are effective in these settings when {\bf UB1}
is not applied.
}

Secondly, we evaluate \kdc against \kdcwob which is \kdc without
applying our upper bound {\bf UB1} as introduced in
\csec\ref{sec:upper_bound}. The
results are also reported in Figures~\ref{fig:realworld} and
\ref{fig:facebook} \highlight{and \ctab\ref{table:facebook_time}}. We can see that \kdc consistently outperforms
\kdcwob, and the improvement can be large, especially on the Facebook
graphs. This demonstrates that our upper bound {\bf UB1} is effective in
improving the performance of \kdc.
Also, we would like to remark that {\bf UB3} is the upper bound proposed
in \kdbb~\cite{AAAI22:Gao}, and is also used in \kdc. Thus, our upper
bound {\bf UB1} is
also tighter than the one proposed in~\cite{AAAI22:Gao}, as otherwise,
the performance of \kdc would be similar to that of \kdcwob.

\begin{table}[t]
\small
\centering
\newcommand{\kdcwo}{\kw{kDC\text{-}D}}
\caption{The difference of preprocessing results between \kdc and
	\kdcwoi ($C^0$
	denotes the initial solution obtained by Line~1 of
	\calg\ref{alg:kdc}; $n^0$ and $m^0$ respectively denote
the number of vertices and the number of edges in the reduced graph
obtained by Line~2 of \calg\ref{alg:kdc}; subscripts denote the
algorithms with \kdcwoi being abbreviated as \kdcwo)}
\label{table:preprocess}
\begin{tabular}{c|ccc|ccc}
\hline
& \multicolumn{3}{c|}{Real-world graphs} & \multicolumn{3}{c}{Facebook
graphs} \\ 
& $\frac{|C_{\kdc}^0|}{|C_{\kdcwo}^0|}$ &
$\frac{n_{\kdc}^0}{n_{\kdcwo}^0}$ & $\frac{m_\kdc^0}{m_\kdcwo^0}$ & $\frac{|C_{\kdc}^0|}{|C_{\kdcwo}^0|}$ &
$\frac{n_{\kdc}^0}{n_{\kdcwo}^0}$ & $\frac{m_\kdc^0}{m_\kdcwo^0}$ \\ \hline
$k=1$ & $1.19$ & $0.27$ & $0.26$ & $1.30$ & $0.03$ & $0.02$ \\
$k=3$ & $1.15$ & $0.47$ & $0.45$ & $1.26$ & $0.04$  & $0.03$ \\
$k=5$ & $1.13$ & $0.52$ & $0.52$ & $1.24$ & $0.06$ & $0.04$ \\
$k=10$ & $1.11$ & $0.63$ & $0.63$ & $1.21$ & $0.11$ & $0.08$ \\
$k=15$ & $1.09$ & $0.68$ & $0.69$ & $1.19$ & $0.16$ & $0.13$ \\
$k=20$ & $1.08$ & $0.73$ & $0.74$ & $1.18$ & $0.23$ & $0.19$ \\ \hline
\end{tabular}
\end{table}

Thirdly, we compare \kdc with \kdcwoi which is \kdc invoking \kw{Degen}
to compute the initial solution at Line~1 of \calg\ref{alg:kdc} and
without applying the reduction rule {\bf RR6} at Line~2 of
\calg\ref{alg:kdc}. As a result, the preprocessing of \kdcwoi (\ie, Lines~1--2
of \calg\ref{alg:kdc}) takes only $\bigo(m)$ time, in contrast to the
$\bigo(\delta(G)\times m)$ preprocessing time of \kdc. The experimental
results are again shown in Figures~\ref{fig:realworld} and
\ref{fig:facebook} \highlight{and \ctab\ref{table:facebook_time}}. We can see that \kdc consistently outperforms
\kdcwoi, and the gap is huge when both $k$ and the time limit are small,
\eg, when $k \leq 5$ and the time limit is at most $10$ seconds. To
explain this, we also show the difference of preprocessing results
between \kdc and \kdcwoi (\ie, Lines~1--2 of \calg\ref{alg:kdc}) in
\ctab\ref{table:preprocess}. We can see that \kdc computes a larger
initial solution and a smaller reduced graph than \kdcwoi; the
improvement is more significant when $k$ is small.

In summary, each of these additional techniques (\ie, 
reduction rules {\bf RR3} and {\bf RR4}, upper bound {\bf UB1}, 
and computing a large initial solution) improves the practical
efficiency of \kdc.

\subsection{Properties of Maximum $k$-Defective Clique}

\begin{table}[t]
\small
\setlength{\tabcolsep}{2pt}
\centering
\caption{The (average and maximum) ratio of the maximum $k$-defective
clique size over the maximum clique size \highlight{for each of the
three graph collections}}
\label{table:size_ratio}
\begin{tabular}{c|cc|cc|cc}
	\hline
	& \multicolumn{2}{c|}{Real-world graphs} &
	\multicolumn{2}{c|}{Facebook graphs} &
	\multicolumn{2}{c}{\highlight{DIMACS10\&SNAP}} \\ 
& Avg Ratio & Max Ratio & Avg Ratio & Max Ratio & Avg Ratio &  Max Ratio \\ \hline
$k=1$ & $1.067$ & $1.5$ & $1.032$ & $1.25$ & $1.046$ & $1.200$ \\
$k=3$ & $1.144$ & $2$ & $1.083$ & $1.5$ & $1.107$ & $1.400$ \\
$k=5$ & $1.201$ & $2$ & $1.118$ & $1.67$ & $1.169$ & $1.600$ \\
$k=10$ & $1.314$ & $2.5$ & $1.170$ & $1.75$ & $1.243$ & $1.800$ \\
$k=15$ & $1.422$ & $3$ & $1.223$ & $2$ & $1.313$ & $2.000$ \\
$k=20$ & $1.516$ & $3.5$ & $1.264$ & $2.25$ & $1.370$ & $2.200$ \\ \hline
\end{tabular}
\end{table}

In this subsection, we analyze the properties of maximum $k$-defective
clique.
%
Firstly, we compare the maximum $k$-defective clique size,
computed by \kdc, with the maximum clique size, computed by
\mcbrb~\footnote{\url{https://lijunchang.github.io/MC-BRB/}}~\cite{KDD19:Chang},
on the \highlight{three} graph collections. \highlight{For each $k$ and each of the
three graph collections}, the results on the average and maximum
ratio of
$\frac{\text{maximum $k$-defective clique size}}{\text{maximum clique
size}}$ over all graphs that \kdc finishes within $3$
hours are reported in \ctab\ref{table:size_ratio} \highlight{(the total number of such
graphs can be found in \ctab\ref{table:against_existing})}; we remark that \mcbrb
successfully finds the maximum clique for all the graphs within the time
limit. From \ctab\ref{table:size_ratio}, we can see that on the
real-world graphs collection, the maximum
$k$-defective clique size is on average $31\%$ (and maximum $150\%$)
larger than the maximum
clique size for $k=10$, and is on average $51\%$ (and maximum $250\%$)
larger for $k=20$. This
demonstrates that the relaxation of $k$-defective clique indeed enables
us to find larger near-cliques.

\begin{table}[t]
\small
\centering
\caption{\highlight{Number of graphs where the maximum $k$-defective
		clique is an extension of a maximum clique}}
\label{table:extension_of_clique}
\begin{tabular}{c|ccc}
\hline
& Real-world & Facebook & DIMACS10\&SNAP \\ \hline
$k = 1$ & 133 & 114 & 37 \\
$k = 3$ & 124 & 93 & 30 \\
$k = 5$ & 114 & 77 & 28 \\
$k = 10$ & 105 & 70 & 28 \\
$k = 15$ & 98 & 62 & 23 \\
$k = 20$ & 94 & 61 & 24 \\ \hline
\end{tabular}
\end{table}

\begin{table}[t]
\small
\centering
\caption{\highlight{Average percentage of vertices that are not fully
		connected in the
maximum $k$-defective clique}}
\label{table:percentage_missing}
\begin{tabular}{c|ccc}
\hline
& Real-world & Facebook & DIMACS10\&SNAP \\ \hline
$k = 1$ & $19.2\%$ & $6.1\%$  & $16.9\%$ \\
$k = 3$ & $33.7\%$ & $15.9\%$ & $32.3\%$ \\
$k = 5$ & $43.3\%$ & $23.0\%$ & $46.6\%$ \\
$k = 10$ & $52.5\%$ & $34.4\%$ & $56.8\%$ \\
$k = 15$ & $59.5\%$ & $43.7\%$ & $64.7\%$ \\
$k = 20$ & $62.9\%$ & $50.3\%$ & $65.9\%$ \\ \hline
\end{tabular}
\end{table}

\highlight{
Secondly, we look into the actual maximum $k$-defective clique and check
(1)~whether it is an extension of a maximum clique and (2)~what fraction of
its vertices have missing neighbors. Note that, the maximum $k$-defective
clique in a graph is not unique, and the results reported here are based
on the maximum $k$-defective clique found by \kdc and thus are only for
the testings that finish within the time limit of 3 hours.
The results are shown in Tables~\ref{table:extension_of_clique} and
\ref{table:percentage_missing}, respectively.
From \ctab\ref{table:extension_of_clique}, we can see that for many,
although not all, of the graphs, the maximum $k$-defective
clique found by \kdc is an
extension of the maximum clique; specifically, the lowest fraction is
$\frac{62}{101} \approx 61.4\%$ which is achieved for $k=15$ on the
Facebook graphs collection. Nevertheless, we have shown in
\ctab\ref{table:size_ratio} that maximum $k$-defective cliques are larger
than maximum cliques.
From \ctab\ref{table:percentage_missing}, we can see that the percentage
of vertices that are not fully connected in a maximum $k$-defective
clique increases along with $k$, which is as expected. For $k\geq 10$ on
the real-world graphs collection,
more than half of the vertices in a maximum $k$-defective clique have
missing neighbors in the $k$-defective clique.
}

\section{Related Work}

The study of $k$-defective clique computation is still in its early
stage. The first exact algorithm for computing the maximum $k$-defective
clique was proposed in~\cite{COA13:Trukhanov}, which is based on the
Russian doll search~\cite{AAAI96:Verfaillie}, a solver for general
constraint optimization problems.
The algorithm of \cite{COA13:Trukhanov} was then improved
in~\cite{DAM18:Gschwind} with new preprocessing rules as well
as a better implementation.
A branch-and-price framework was designed in~\cite{IJC21:Gschwind}.
A continuous cubic formulation was established in~\cite{MP22:Stozhkov},
which generalizes the Motzkin-Straus formulation from the maximum clique
problem to the maximum $k$-defective clique problem; however, only
heuristic algorithms are designed in~\cite{MP22:Stozhkov}.
Chen et al.~\cite{COR21:Chen} proposed the \madec algorithm whose time
complexity beats the trivial $\bigo^*(2^n)$ time complexity, and
developed a graph coloring-based upper bound as well as other pruning
techniques.
The \kdbb algorithm proposed in~\cite{AAAI22:Gao} is the currently
fastest algorithm in practice, but its time complexity is the trivial
$\bigo^*(2^n)$. In this paper we proposed the \kdc algorithm which not
only has a better time complexity but also runs significantly faster in
practice than all existing algorithms.

The problem of (approximately) counting all $k$-defective cliques of a
particular size, for the special cases of $k=1$ and $2$, was recently
formulated and studied in~\cite{WWW20:Jain}. As the property of
$k$-defective clique is {\em hereditary}, the number of $k$-defective
cliques could explode drastically when the maximum $k$-defective clique
size increases. Thus, the maximum $k$-defective clique size may provide a
rough indication on the counting results. In addition, the pruning
techniques proposed in this paper may speed up the enumeration and counting
of {\em large} $k$-defective cliques.

Another related problem is maximum clique computation, which has been
extensively studied both theoretically and practically.
From a theoretical perspective, the worst case time complexity has
been gradually improved from
$\bigo^*(1.4422^n)$ to $\bigo^*(1.2599^n)$~\cite{SJC77:Tarjan},
$\bigo^*(1.2346^n)$~\cite{TC86:Jian}, and
$\bigo^*(1.2108^n)$~\cite{JA86:Robson}, with the state of the art being
$\bigo^*(1.1888^n)$~\cite{clique}; however, these algorithms are 
of theoretical interests only and have not been implemented. On the other
hand, a plethora of practical algorithms, without caring about the time
complexity analysis, have also been designed and implemented, 
\eg,~\cite{ORL90:Carraghan,COR17:Li,ICTAI13:Li,JGO94:Pardalos,IM15:Pattabiraman,JSC15:Rossi,COR16:Segundo,WALCOM17:Tomita,WALCOM10:Tomita,ICDE13:Xiang,KDD19:Chang}.
For these algorithms, upper bounds have been demonstrated to be critical
for the practical efficiency, and the most successful upper bounds are
based on graph coloring and MaxSAT reasoning.
However, these techniques cannot be easily extended to compute the maximum
$k$-defective clique for $k \geq 1$, despite that $k$-defective clique
is a relaxation of clique and $0$-defective cliques are just cliques.
For example, it was attempted in~\cite{COR21:Chen} to adapt the graph
coloring to compute an upper bound of the maximum $k$-defective clique
size, but as we demonstrated, the adaptation failed to compute a tight
upper bound and is not effective in improving the efficiency. In contrast,
we in this paper proposed a much tighter upper bound based on graph
coloring.

\section{Finding Top-$r$ $k$-defective Cliques}

In this section, we briefly discuss how to extend our techniques to two
variants of finding top-$r$ $k$-defective cliques. A thorough
investigation of these problems is beyond the scope of this paper, and
will be our future work.

Firstly, our techniques can be extended
to find top-$r$ maximal $k$-defective cliques, \ie, find the $r$ maximal
$k$-defective cliques that are largest. To do so, we will need to modify
our algorithm to enumerate all large maximal $k$-defective cliques.
Specifically, we will need to (1)~change the $d_g(u)$ condition of {\bf
RR2} to $d_g(u) \geq |V(g)|-1$, (2)~store in ${\cal C}$ the set of $r$
currently found largest maximal $k$-defective cliques rather than just the
single largest one, (3)~change the lower bound $\lb$ used in {\bf
RR3}--{\bf RR6} to be the size of
the smallest $k$-defective clique in ${\cal C}$. Due to the first
change, the time complexity would be $\bigo^*(\gamma_{2k}^n)$, the same
as the maximum $k$-defective clique computation algorithm
of~\cite{COR21:Chen}.

Secondly, our techniques can be extended to find top-$r$ diversified
$k$-defective cliques, \ie, find $r$ $k$-defective cliques that
collectively cover/contain the most number of distinct vertices.
Specifically, we iteratively conduct the following until $r$
$k$-defective cliques have been reported or the graph becomes empty:
\begin{enumerate}
	\item find the maximum $k$-defective clique $C$ in the current graph
		by invoking \kdc,
	\item remove $C$ from the current graph.
\end{enumerate}
Note that, this approach may not find the optimal result, but the
reported result provides a $(1-\frac{1}{e})$-approximation guarantee. The
time complexity is simply $r$ times that of \kdc.

\section{Conclusion}
\label{sec:conclusion}

In this paper, we advanced the state of the art for the problem of exact
maximum $k$-defective clique computation, in terms of both worst case
time complexity and practical performance. In specific, we first
developed a general framework \kdc based on our newly designed branching
rule {\bf BR} and reduction rules {\bf RR1} and {\bf RR2}. We proved that
our framework beats the trivial time complexity of $\bigo^*(2^n)$ and
achieves a better time complexity than all existing algorithms.
Then to make \kdc practically efficient, we further proposed a new upper
bound {\bf UB1}, two reduction rules {\bf RR3} and {\bf RR4}, as well as
an algorithm for efficiently computing a large initial solution.
Extensive empirical studies on \highlight{three} benchmark graph
collections with \highlight{$290$} graphs in total
demonstrated the practical superiority of \kdc over the existing
algorithms.

\begin{acks}
The author is supported by the Australian Research Council Fundings of
FT180100256 and DP220103731.
\end{acks}

\balance
{
\bibliographystyle{ACM-Reference-Format}
\bibliography{sigproc}
}

\end{document}